\newcolumntype{C}{>{$}c<{$}}
\definecolor{emerald}{rgb}{0.07, 0.53, 0.03}
\definecolor{babyblueeyes}{rgb}{0.63, 0.79, 0.95}
\newcommand{\ket}[1]{\ensuremath{\left|#1\right\rangle}}
\newcommand{\scomm}[2]{\ensuremath{[\![ #1, #2]\!]}}
\newcommand{\myfancysymbol}{
 {\mathchoice
  {\includegraphics[height=1.6ex]{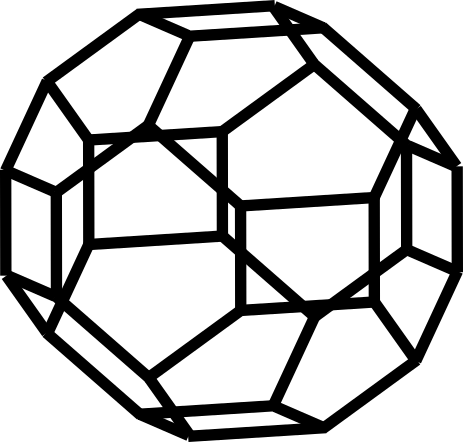}}
  {\includegraphics[height=1.6ex]{BallCell.png}}
  {\includegraphics[height=1.2ex]{BallCell.png}}
  {\includegraphics[height=0.9ex]{BallCell.png}}
 }
}
\newtheorem{theorem}{Theorem}
\newtheorem{lemma}{Lemma}
\newtheorem{corollary}{Corollary}[theorem]
\newtheorem{definition}{Definition}
\newenvironment{customresult}[1]
  {\innercustomresult}
  {\endinnercustomresult}
\newenvironment{customthm}[1]
  {\innercustomthm}
  {\endinnercustomthm}
\newenvironment{customcor}[1]
  {\innercustomcor}
  {\endinnercustomcor}
\begin{document}

\title{Characterization of solvable spin models via graph invariants}
\author[1]{Adrian Chapman}
\email{adrian.chapman@sydney.edu.au}
\author[1]{Steven T. Flammia}
\orcid{0000-0002-3975-0226}
\affil[1]{Centre for Engineered Quantum Systems, School of Physics, The University of Sydney, Sydney, Australia} 

\date{May 27, 2020}

\begin{abstract}
Exactly solvable models are essential in physics. 
For many-body spin-$\mathbf{\sfrac{1}{2}}$ systems, an important class of such models consists of those that can be mapped to free fermions hopping on a graph. 
We provide a complete characterization of models which can be solved this way. 
Specifically, we reduce the problem of recognizing such spin models to the graph-theoretic problem of recognizing line graphs, which has been solved optimally. 
A corollary of our result is a complete set of constant-sized commutation structures that constitute the obstructions to a free-fermion solution.  
We find that symmetries are tightly constrained in these models. 
Pauli symmetries correspond to either: (i) cycles on the fermion hopping graph, (ii) the fermion parity operator, or (iii) logically encoded qubits. 
Clifford symmetries within one of these symmetry sectors, with three exceptions, must be symmetries of the free-fermion model itself. 
We demonstrate how several exact free-fermion solutions from the literature fit into our formalism and give an explicit example of a new model previously unknown to be solvable by free fermions.
\end{abstract}

\maketitle

\section{Introduction}

Exactly solvable models provide fundamental insight into physics without the need for difficult numerical methods or perturbation theory.
In the particular setting of many-body spin-$\sfrac{1}{2}$ systems, a remarkable method for producing exact solutions involves finding an effective description of the system by noninteracting fermions. 
This reduces the problem of solving the $n$-spin system over its full $2^n$-dimensional Hilbert space to one of solving a single-particle system hopping on a lattice of $O(n)$ sites. 
The paradigmatic example of this method is the exact solution for the XY model \cite{lieb1961two}, where the Jordan-Wigner transformation \cite{jordan1928uber} is employed to describe the model in terms of free fermions propagating in one spatial dimension. 
These fermions are resolved as nonlocal Pauli operators in the spin picture, and the nonlocal nature of this mapping may suggest that finding generalizations to this mapping for more complicated spin systems is a daunting task. 
Of the many generalizations that have since been proposed \cite{fradkin1989jordanwigner, wang1991ground, huerta1993bosefermi, batista2001generalized, verstraete2005mapping, nussinov2012arbitrary, chen2018exact, backens2019jordanwigner, tantivasadakarn2020jordanwigner}, a particularly interesting solution to this problem is demonstrated in the exact solution of a 2-d spin model on a honeycomb lattice introduced by Kitaev \cite{kitaev2006anyons}. For this model, the transformation to free-fermions can be made locality-preserving over a fixed subspace through the use of local symmetries.

The dynamics of free-fermion systems are generated by Gaussian-fermionic Hamiltonians and correspond to the class of so-called matchgate circuits. 
This circuit class coincides with the group of free-fermion propagators generated by arbitrarily time-dependent single-particle Hamiltonians \cite{knill2001fermionic, terhal2002classical} and has an extensive complexity-theoretic characterization. 
In general, matchgate circuits can be efficiently simulated classically with arbitrary single-qubit product-state inputs and measurement \cite{vandennest2011simulating, brod2016efficient}. 
However, they become universal for quantum computation with the introduction of non-matchgates such as the $\mathrm{SWAP}$ gate \cite{jozsa2008matchgates, brod2011extending}, certain measurements and resource inputs \cite{bravyi2006universal, hebenstreit2019all}, and when acting on nontrivial circuit geometries \cite{brod2014computational}. 
Furthermore, these circuits share an interesting connection to the problem of counting the number of perfect matchings in a graph, which is the context in which they were first developed \cite{valiant2002quantum, cai2006valiants, cai2007theory, valiant2008holographic}. 
This problem is known to be very hard computationally (it is \#P-complete \cite{papadimitriou1994computational}), but is efficiently solved for planar graphs using the so-called Fisher-Kasteleyn-Temperley algorithm \cite{kasteleyn1961statistics, temperley1961dimer}.

In this work, we develop a distinct connection between free-fermion systems and graph theory by using tools from quantum information science. 
The central object of our formalism is the \emph{frustration graph}. 
This is a network quantifying the anticommutation structure of terms in the spin Hamiltonian when it is expanded in the basis of Pauli operators \cite{planat2008pauli}. 
This graph has been invoked previously in the setting of variational quantum eigensolvers \cite{jena2019pauli, verteletskyi2019measurement, zhao2019measurement, izmaylov2019unitary, yen2019measuring, gokhale2019minimizing, crawford2019efficient, bonetmonroig2019nearly}, commonly under the name ``anti-compatibility graph". 
We show that the problem of recognizing whether a given spin model admits a free-fermion solution is equivalent to that of recognizing whether its frustration graph is a \emph{line graph}, which can be performed optimally in linear time \cite{roussopoulos1973max, lehot1974optimal, degiorgi1995dynamic}. 
From the definition of a line graph, it will be clear that such a condition is necessary, but we will show that it is also sufficient. 
When the condition is met, we provide an explicit solution to the model. 

Line graphs have recently emerged as the natural structures describing the effective tight-binding models for superconducting waveguide networks \cite{Kollar2019Hyperbolic, Kollar2019Linegraph,Boettcher2019Quantum}. 
In this setting, the line graph corresponds to the physical hopping graph of photons in the network. 
We will see how this scenario is a kind of ``inverse problem" to the one we consider, wherein fermions are hopping on the \emph{root} of the line graph. 
It is clear from both scenarios that the topological connectivity structure of many-body systems plays a central role in their behavior, and it is remarkable that this is already being observed in experiments. 
We expect that further investigation of the graph structure of many-body Hamiltonians will continue to yield important insights into their physics.

\subsection{Summary of Main Results}
Here we give a brief summary of the main results. 
We first define the frustration graph of a Hamiltonian, given in the Pauli basis, as the graph with nonzero Pauli terms as vertices and an edge between two vertices if their corresponding terms anticommute. 
A line graph $G$ of a graph $R$ is the intersection graph of the edges of $R$ as two-element subsets of the vertices of $R$. 
With these simple definitions, we can informally state our first main result, which we call our ``fundamental theorem:"

\begin{customresult}{1}[Existence of free-fermion solution; Informal version of Thm.~\ref{thm:ffsolution}] Given an $n$-qubit Hamiltonian in the Pauli basis for which the frustration graph $G$ is the line graph of another graph $R$, then there exists a free-fermion description of $H$. 
\label{res:ffsolution}
\end{customresult}

From this description, an exact solution for the spectrum and eigenstates of $H$ can be constructed. 
This theorem illustrates a novel connection between the physics of quantum many-body systems and graph theory with some surprising implications. 
First, it gives the exact correspondence between the spatial structure of a spin Hamiltonian and that of its effective free-fermion description. 
As we will see through several examples, this relationship is not guaranteed to be straightforward. 
Second, the theorem gives an exact condition by which a spin model can \emph{fail} to have a free-fermion solution, the culprit being the presence of forbidden anticommutation structures in the frustration graph of $H$.

Some caveats to Result~\ref{res:ffsolution} (that are given precisely in the formal statement, Theorem \ref{thm:ffsolution}) involve cases in which this mapping between Pauli terms in $H$ and fermion hopping terms is not one-to-one. 
In particular, if we are given a Hamiltonian whose frustration graph is not a line graph, then a free-fermion solution may still be possible via a non-injective mapping over a subspace defined by fixing stabilizer degrees of freedom. 
Additionally, it is possible for a given spin Hamiltonian to describe multiple free-fermion models simultaneously, each generating dynamics over an independent stabilizer subspace of the full Hilbert space as for the Kitaev honeycomb model \cite{kitaev2006anyons}. 
These symmetries are sometimes referred to as gauge degrees of freedom, though we will reserve this term for freedoms which cannot affect the physics of the free-fermion model.
Finally, it may be the case that the free-fermion model contains states which are nonphysical in the spin-Hamiltonian picture, and so these must be removed by fixing a symmetry as well. 
Luckily, all of these cases manifest as structures in the frustration graph of $H$. 
The first, regarding when a non-injective free-fermion solution is required, is signified by the presence of so-called twin vertices, or vertices with the same neighborhood. 
We deal with this case in our first lemma. 
The next two cases are covered by our second theorem:

\begin{customresult}{2}[Graphical symmetries; Informal version of Thm.~\ref{thm:symmetries}] Given an $n$-qubit Hamiltonian in the Pauli basis for which the frustration graph $G$ is the line graph of another graph $R$, then Pauli symmetries of $H$ correspond to either:
\begin{enumerate}[(i)]
  \item Cycles of $R$;
  \item A T-join of $R$, associated to the fermion-parity operator;
  \item Logically encoded qubits;
\end{enumerate}
and these symmetries generate an abelian group.
\end{customresult}

\noindent We then prove that we can always fix all of the cycle symmetries by choosing an orientation of the root graph $R$. 
Our results also relate the more general class of Clifford symmetries to the symmetries of the single-particle free-fermion Hamiltonian. 
We show that with exactly three exceptions, Clifford symmetries of the spin model, in a subspace defined by fixing the symmetries listed above, must also be symmetries of the single-particle Hamiltonian (see Corollary~\ref{cor:clifsym} for a precise statement).

Finally, we illustrate these ideas with several examples: small systems of up to 3 qubits, the 1-dimensional anisotropic $XY$ model in a transverse field and its nearest-neighbor solvable generalization, the Kitaev honeycomb model, the 3-dimensional frustrated hexagonal gauge color code \cite{roberts2019symmetry}, and the Sierpinski-Hanoi model. 
To the best of our knowledge, this last model was previously not known to be solvable.

The remainder of the paper is organized as follows. 
In Section \ref{sec:bkgrnd}, we will introduce notation and give some background on the formalism of free-fermions and frustration graphs. 
In Section \ref{sec:fundthm}, we will formally state Theorem 1 and some general implications thereof. 
In Section \ref{sec:symmetries}, we elaborate on the structure of symmetries which can be present in our class of solvable models. 
In Section \ref{sec:orientation}, we will use the theorems of the previous two sections to outline an explicit solution method. 
We close by demonstrating how the examples of free-fermion solutions listed above fit into this formalism in Section \ref{sec:examples}.

\section{Background}
\label{sec:bkgrnd}
\subsection{Frustration Graphs}

The models we consider are spin-$\sfrac{1}{2}$ (qubit) Hamiltonians written in the Pauli basis
\begin{align}
    H = \sum_{\boldsymbol{j} \in V} h_{\boldsymbol{j}} \sigma^{\boldsymbol{j}} \mathrm{,}
\label{eq:hdef}
\end{align}
where $\boldsymbol{j} \equiv (\boldsymbol{a}, \boldsymbol{b})$, with $\boldsymbol{a}$, $\boldsymbol{b} \in \{0, 1\}^{\times n}$ labeling an $n$-qubit Pauli operator as
\begin{align}
    \sigma^{\boldsymbol{j}} = i^{\boldsymbol{a} \cdot \boldsymbol{b}} \left(\bigotimes_{k = 1}^n X_k^{a_k} \right)\left(\bigotimes_{k = 1}^n Z_k^{b_k} \right) \mathrm{.}
\end{align}
The exponent of the phase factor, $\boldsymbol{a} \cdot \boldsymbol{b}$, is the \emph{Euclidean inner product} between $\boldsymbol{a}$ and $\boldsymbol{b}$. 
This phase is chosen such that the overall operator is Hermitian, and such that $a_k = b_k = 1$ means that $\sigma^{\boldsymbol{j}}$ acts on qubit $k$ by a Pauli-$Y$ operator. 
We denote the full $n$-qubit Pauli group by $\mathcal{P}$, and $V \subseteq \mathcal{P}$ is the set of Pauli terms in $H$ (i.e.\ $h_{\boldsymbol{j}} = 0$ for all $\boldsymbol{j} \notin V$). 
Let the Pauli subgroup generated by this set be denoted $\mathcal{P}_H$. 

For our purposes, what is important is not the explicit Pauli description of the Hamiltonian, but rather the commutation relations between its terms. 
As Pauli operators only either commute or anticommute, a useful quantity is their \emph{scalar commutator} $\scomm{\cdot}{\cdot}$, which we define implicitly as
\begin{align}
    \sigma^{\boldsymbol{j}} \sigma^{\boldsymbol{k}} = \scomm{\sigma^{\boldsymbol{j}}}{\sigma^{\boldsymbol{k}}} \sigma^{\boldsymbol{k}} \sigma^{\boldsymbol{j}} \mathrm{.}
\end{align}
The scalar commutator thus only takes the values $\pm 1$. 
Additionally, the scalar commutator distributes over multiplication in each argument, e.g.
\begin{align}
    \scomm{\sigma^{\boldsymbol{j}}}{\sigma^{\boldsymbol{k}}\sigma^{\boldsymbol{l}}} = \scomm{\sigma^{\boldsymbol{j}}}{\sigma^{\boldsymbol{k}}} \scomm{\sigma^{\boldsymbol{j}}}{\sigma^{\boldsymbol{l}}}.
\label{eq:distribution}
\end{align}
For $n$-qubit Paulis, the scalar commutator can thus be read off from the Pauli labels as
\begin{align}
    \scomm{\sigma^{\boldsymbol{j}}}{\sigma^{\boldsymbol{k}}} = (-1)^{\langle \boldsymbol{j}, \boldsymbol{k} \rangle}
\label{eq:symplscomm}
\end{align}
Here, $\langle \boldsymbol{j}, \boldsymbol{k} \rangle$ is the \emph{symplectic inner product}
\begin{align}
    \langle \boldsymbol{j}, \boldsymbol{k} \rangle \equiv \begin{pmatrix} \boldsymbol{a}_j & \boldsymbol{b}_j \end{pmatrix} \begin{pmatrix} \mathbf{0}_{n} & \mathbf{I}_n \\ 
    - \mathbf{I}_n & \mathbf{0}_{n} \end{pmatrix} \begin{pmatrix} \boldsymbol{a}_k \\ \boldsymbol{b}_k \end{pmatrix} \mathrm{,}
\label{eq:sympldef}
\end{align} 
where naturally $\boldsymbol{j} \equiv (\boldsymbol{a}_j, \boldsymbol{b}_j)$ and $\boldsymbol{k} \equiv (\boldsymbol{a}_k, \boldsymbol{b}_k)$. 
$\mathbf{0}_{n}$ is the $n \times n$ all-zeros matrix, and $\mathbf{I}_n$ is the $n \times n$ identity matrix. 
Eq.~(\ref{eq:symplscomm}) captures the fact that a factor of $-1$ is included in the scalar commutator for each qubit where the operators $\sigma^{\boldsymbol{j}}$ and $\sigma^{\boldsymbol{k}}$ differ and neither acts trivially. 
Since the inner product appears as the exponent of a sign factor, without loss of generality, we can replace it with the \emph{binary symplectic inner product}
\begin{align}
    \langle \boldsymbol{j}, \boldsymbol{k} \rangle_2 \equiv \langle \boldsymbol{j}, \boldsymbol{k} \rangle \bmod 2.
\end{align}

\begin{table}[t]
\centering
\setcellgapes{3pt}
\makegapedcells
\begin{tabular}{c c c}
\toprule
$H$ & $\sum\limits_{j \in \{x, y, z\}} h_j\sigma^{j}$ & $\sum\limits_{\substack{\boldsymbol{j} \in \{0, x, y, z\}^{\times 2}\\ \boldsymbol{j} \neq (0, 0)}} h_{\boldsymbol{j}}\sigma^{\boldsymbol{j}}$ \\ 
\midrule 
\makecell{$G(H)$ \\ $\simeq L(R)$} & \makecell{\includegraphics[width=0.1\textwidth]{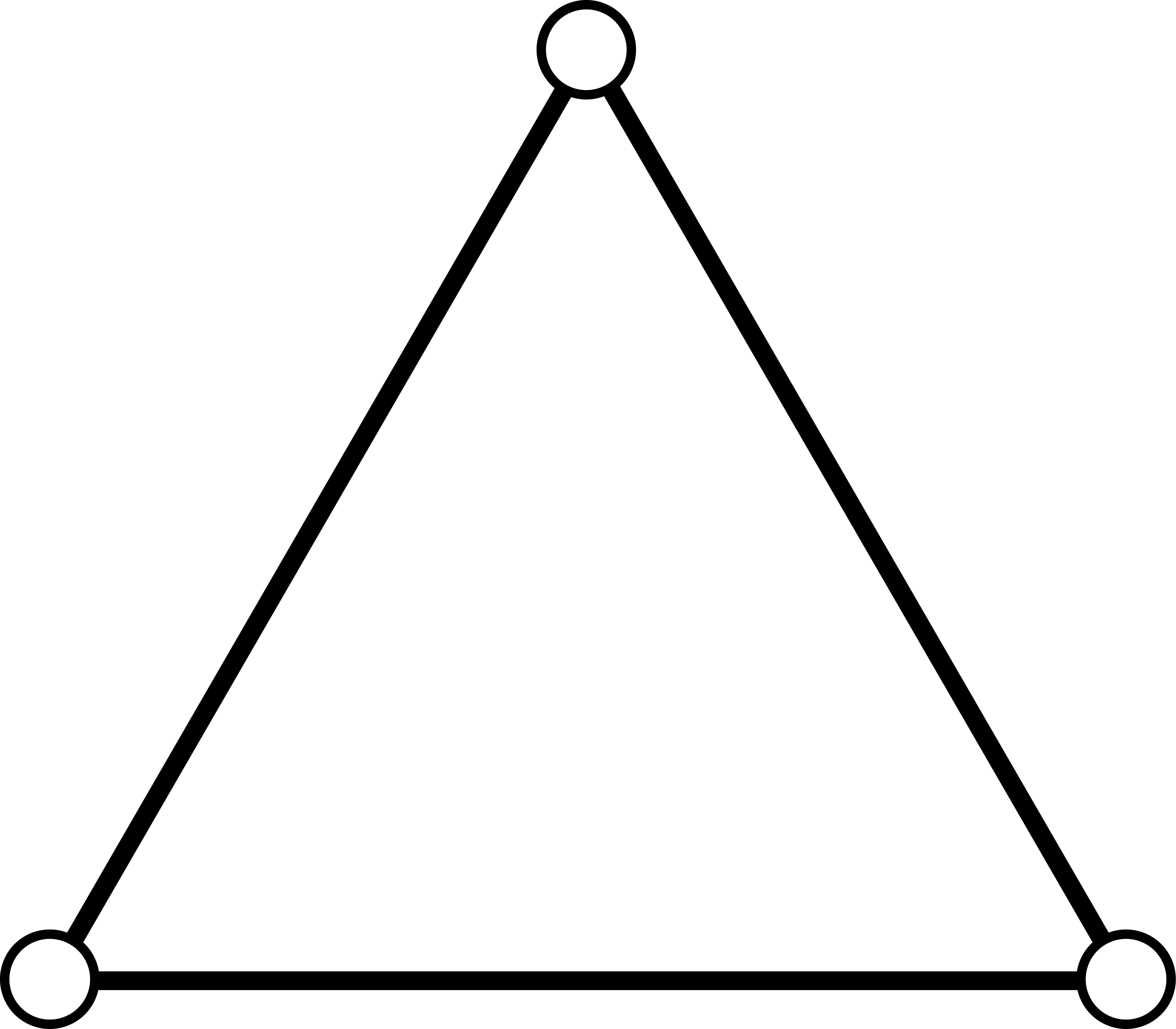}} &  \makecell{\includegraphics[width=0.15\textwidth]{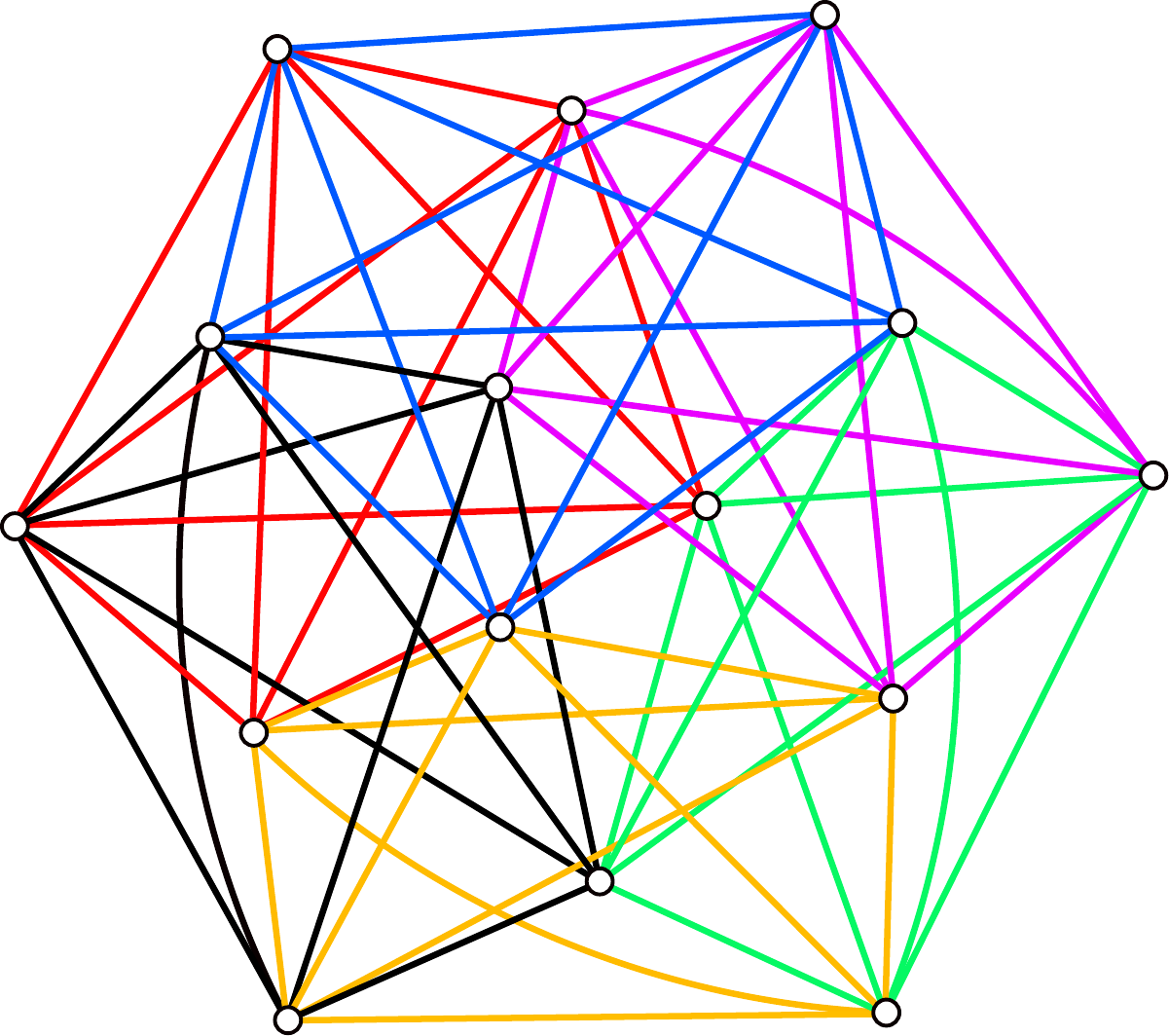}} \\ \cline{1-3}
$R$ & \makecell{\includegraphics[width=0.07\textwidth]{K3Graph}} or \makecell{\includegraphics[width=0.07\textwidth]{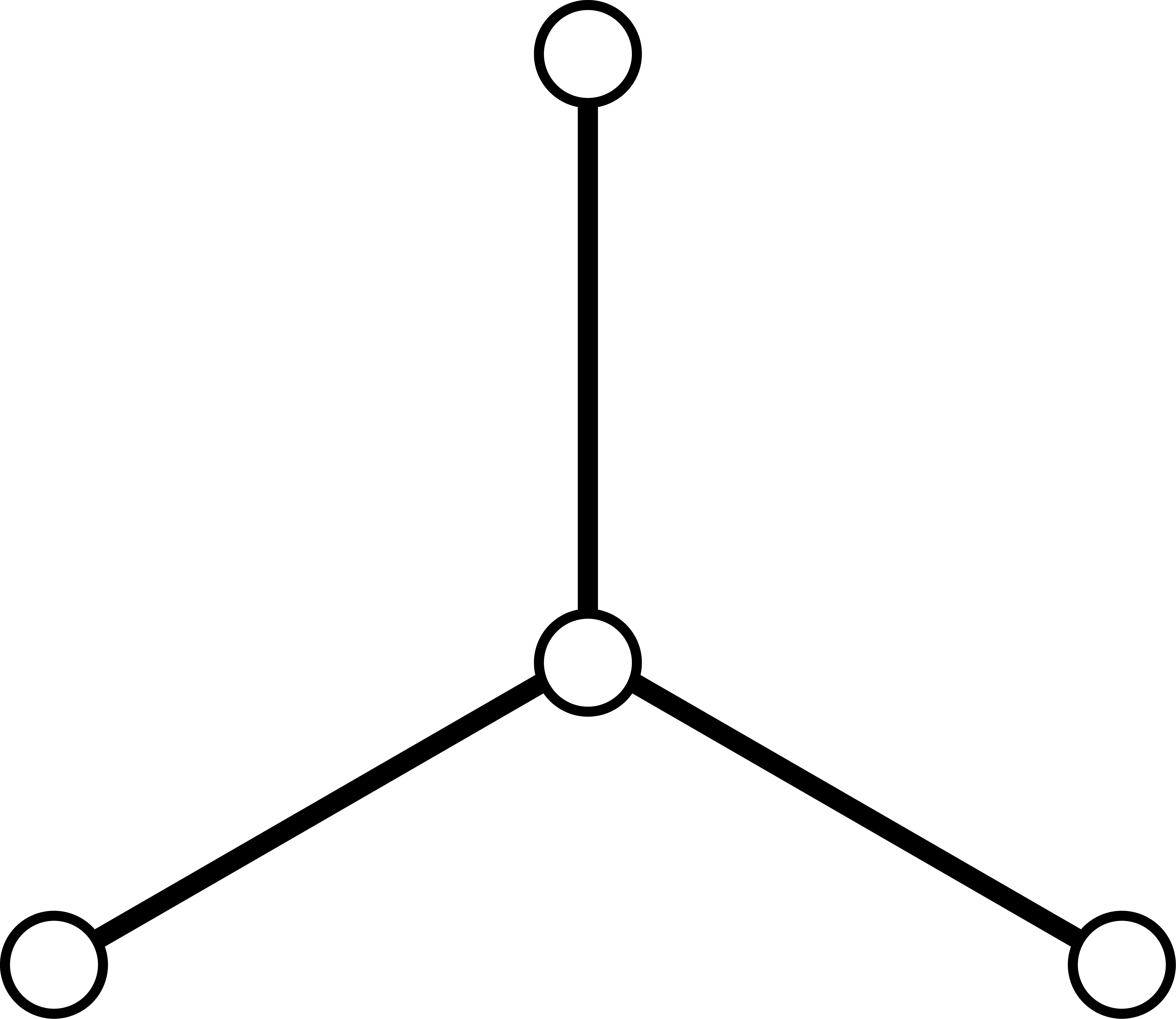}} & \makecell{\includegraphics[width=0.1\textwidth]{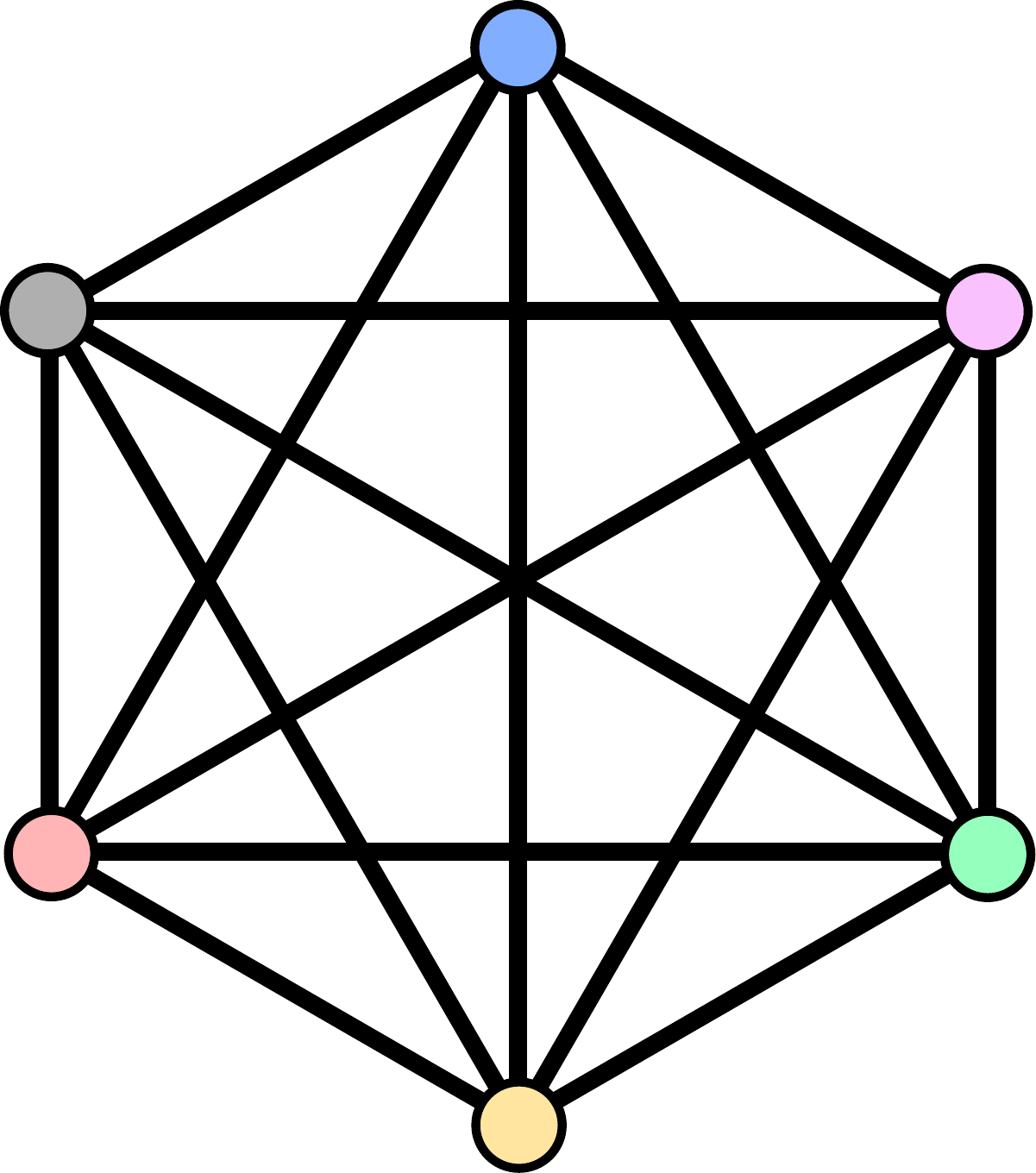}} \\
\bottomrule
\end{tabular}
\caption{Example frustration graphs for general Hamiltonians on small (1- and 2-qubit) systems. 
(Left column) For general single-qubit Hamiltonians, the frustration graph is the complete graph on three vertices, $K_3$. 
By the Whitney isomorphism theorem \cite{whitney1932congruent}, $K_3$ is the only graph which is not the line graph of a unique graph, but rather is the line graph of both $K_3$ and the `claw' graph, $K_{1, 3}$. 
This implies the existence of two distinct free-fermion solutions of single-qubit Hamiltonians. 
(Right column) For general two-qubit Hamiltonians, the frustration graph is the line graph of the complete graph on six vertices $K_6$ \cite{, planat2008pauli, goodmanson1996graphical}. 
Colored are the size-five cliques corresponding to the degree-five vertices of the root graph. 
This mapping implies the existence of a free-fermion solution for general two-qubit Hamiltonians by six fermions, reflecting the accidental Lie-algebra isomorphism $\mathfrak{su}(4) \simeq \mathfrak{spin}(6)$ (see Section \ref{sec:small}).}
\label{tab:notsum}
\end{table}

Through the binary symplectic inner product, the scalar commutator defines a symmetric binary relation between terms in the Hamiltonian, to which we associate the adjacency matrix of a graph. 
Denote the \emph{frustration graph} for a Hamiltonian of the form in Eq.~(\ref{eq:hdef}) by $G(H) \equiv (V, E)$ with vertex set given by the Pauli terms appearing in $H$, and edge set
\begin{align}
    E \equiv \{(\boldsymbol{j}, \boldsymbol{k}) | \langle \boldsymbol{j}, \boldsymbol{k} \rangle_2 = 1\}
\end{align}
That is, two Pauli terms correspond to neighboring vertices in $G(H)$ if and only if they anticommute. 
Without loss of generality, we can assume that $G(H)$ is connected, as disconnected components of this graph correspond to commuting collections of terms in the Hamiltonian and can thus be independently treated. 
As such, we will further assume that $H$ has no identity component in the expansion (\ref{eq:hdef})---rendering it traceless---since this will only contribute an overall energy shift to the system with no effect on dynamics.

\subsection{Majorana Fermions}
A related set of Hermitian operators which only either commute or anticommute is that of the Majorana fermion modes $\{\gamma_{\mu}\}_{\mu}$, which satisfy the canonical anticommutation relations
\begin{align}
    \gamma_{\mu} \gamma_{\nu} + \gamma_{\nu} \gamma_{\mu} = 2\delta_{\mu \nu} I \mathrm{,}
\label{eq:car}
\end{align}
and for which $\gamma_{\mu}^{\dagger} = \gamma_{\mu}$. 
A familiar way of realizing these operators in terms of $n$-qubit Pauli observables is through the Jordan-Wigner transformation
\begin{align}
    \gamma_{2j - 1} = \bigotimes_{k = 1}^{j - 1} Z_k \otimes X_j \mbox{\hspace{10mm}} \gamma_{2j} = \bigotimes_{k = 1}^{j - 1} Z_k \otimes Y_j \mathrm{.}
\label{eq:jwtransform}
\end{align}
The Pauli operators on the right can easily be verified to constitute $2n$ operators satisfying Eq.~(\ref{eq:car}). 
Of course, we will explore the full set of generalizations to this transformation in this work. 
We seek to identify those qubit Hamiltonians which can be expressed as quadratic in the Majorana modes. 
Such \emph{free-fermion} Hamiltonians are written as
\begin{align}
    \widetilde{H} =  i \boldsymbol{\gamma} \cdot \mathbf{h} \cdot \boldsymbol{\gamma}^{\mathrm{T}} \equiv 2i \sum_{(j, k) \in \widetilde{E}} h_{jk} \gamma_j \gamma_k
\label{eq:ffhamiltonian}
\end{align}
where $\boldsymbol{\gamma}$ is a row-vector of the Majorana operators, and $\mathbf{h}$ is the \emph{single-particle Hamiltonian}. 
Without loss of generality, $\mathbf{h}$ can be taken as a real antisymmetric matrix, as we can similarly assume $\widetilde{H}$ is traceless, and the canonical anticommutation relations Eq.~(\ref{eq:car}) guarantee that any symmetric component of $\mathbf{h}$ will not contribute to $\widetilde{H}$. 
$\widetilde{E}$ is the edge-set of the \emph{fermion-hopping graph} $R \equiv (\widetilde{V}, \widetilde{E})$ on the fermion modes $\widetilde{V}$. 
That is, $h_{jk} = 0$ for those pairs $(j, k) \notin \widetilde{E}$, and the factor of two in the rightmost expression accounts for the fact that each edge in $\widetilde{E}$ is included only once in the sum.

As a result of the canonical anticommutation relations (\ref{eq:car}), the individual Majorana modes transform covariantly under the time evolution generated by $\widetilde{H}$
\begin{align}
    \mathrm{e}^{i \widetilde{H} t} \gamma_{\mu} \mathrm{e}^{-i \widetilde{H} t} = \sum_{\nu \in \widetilde{V}} \left(\mathrm{e}^{4 \mathbf{h} t}\right)_{\mu \nu} \gamma_{\nu}  
\end{align}
since
\begin{align}
    [\bm{\gamma} \cdot \mathbf{h} \cdot \boldsymbol{\gamma}^{\mathrm{T}}, \gamma_{\mu}] = -4 (\mathbf{h} \cdot \boldsymbol{\gamma}^{\mathrm{T}})_{\mu}\,.
\end{align}
Since $\mathbf{h}$ is antisymmetric and real, $\mathrm{e}^{4\mathbf{h} t}\in \mathrm{SO}(2n, \mathds{R})$. 
Thus, $\mathbf{h}$ can be block-diagonalized via a real orthogonal matrix, $\mathbf{W}\in \mathrm{SO}(2n, \mathds{R})$, as
\begin{align}
    \mathbf{W}^{\mathrm{T}} \cdot \mathbf{h} \cdot \mathbf{W} = \bigoplus_{j = 1}^n \begin{pmatrix}
0 & -\lambda_j \\
\lambda_j & 0 \\
\end{pmatrix}
\end{align}
We can represent $\mathbf{W}$ as the exponential of a quadratic Majorana fermion operator as well, by defining 
\begin{align}
    \mathbf{W} \equiv \mathrm{e}^{4 \mathbf{w}} \mathrm{,}
\end{align}
$\widetilde{H}$ is therefore diagonalized as
\begin{align}
    \mathrm{e}^{- \boldsymbol{\gamma}\cdot \mathbf{w} \cdot \boldsymbol{\gamma}^{\mathrm{T}}} \widetilde{H} \mathrm{e}^{\boldsymbol{\gamma}\cdot \mathbf{w} \cdot \boldsymbol{\gamma}^{\mathrm{T}}} &= i \boldsymbol{\gamma} \cdot \left(\mathbf{W}^{\mathrm{T}} \cdot \mathbf{h} \cdot \mathbf{W} \right) \cdot \boldsymbol{\gamma}^{\mathrm{T}} \\
    &= -2i \sum_{j = 1}^n \lambda_j \gamma_{2j - 1} \gamma_{2j} \\ 
    \mathrm{e}^{- \boldsymbol{\gamma}\cdot \mathbf{w} \cdot \boldsymbol{\gamma}^{\mathrm{T}}} \widetilde{H} \mathrm{e}^{\boldsymbol{\gamma}\cdot \mathbf{w} \cdot \boldsymbol{\gamma}^{\mathrm{T}}} &= 2\sum_{j = 1}^{n} \lambda_j Z_j
\end{align}
Note that the exact diagonalization can be performed with reference to the \emph{quadratics} in the Majorana fermion modes only. 
To completely solve the system, it is only necessary to diagonalize $\mathbf{h}$ classically, find a generating matrix $\mathbf{w}$, and diagonalize $\widetilde{H}$ using an exponential of quadratics with regard to some fermionization like Eq.~(\ref{eq:jwtransform}). 
Eigenstates of $\widetilde{H}$ can be found by acting $\mathrm{e}^{\boldsymbol{\gamma}\cdot \mathbf{w} \cdot \boldsymbol{\gamma}^{\mathrm{T}}}$ on a computational basis state $\ket{\mathbf{x}}$ for $\mathbf{x} \in \{0, 1\}^{\times n}$. 
The associated eigenvalue is
\begin{align}
    E_{\mathbf{x}} = 2\sum_{j = 1}^n (-1)^{x_j} \lambda_j
\end{align}
Therefore, systems of the form in Eq.~(\ref{eq:ffhamiltonian}) may be considered \emph{exactly solvable} classically, since their exact diagonalization is reduced to exact diagonalization on a \emph{poly}$(n)$-sized matrix $\mathbf{h}$. 

\section{Fundamental Theorem}
\label{sec:fundthm}

As mentioned previously, we seek to characterize the full set of Jordan-Wigner-like transformations, generalizing Eq.~(\ref{eq:jwtransform}). 
To be more precise, we ask for the conditions under which there exists a mapping $\phi: V \mapsto \widetilde{V}^{\times 2}$, for some set $\widetilde{V}$ (the fermion modes), effecting
\begin{align}
    \sigma^{\boldsymbol{j}} \mapsto i \gamma_{\phi_1(\boldsymbol{j})} \gamma_{\phi_2(\boldsymbol{j})} \mathrm{,}
\label{eq:paulitopair}
\end{align}
for $\phi_1(\boldsymbol{j})$, $\phi_2(\boldsymbol{j}) \in \widetilde{V}$, and such that
\begin{align}
    \scomm{\sigma^{\boldsymbol{j}}}{\sigma^{\boldsymbol{k}}} = \scomm{\gamma_{\phi_1(\boldsymbol{j})} \gamma_{\phi_2(\boldsymbol{j})}}{\gamma_{\phi_1(\boldsymbol{k})} \gamma_{\phi_2(\boldsymbol{k})}}
\label{eq:scommeq}
\end{align}
for all pairs, $\boldsymbol{j}$ and $\boldsymbol{k}$. 
Such a mapping induces a term-by-term \emph{free-fermionization} of the Hamiltonian (\ref{eq:hdef}) to one of the form (\ref{eq:ffhamiltonian}) such that
\begin{align}
    G(H) \simeq G(\widetilde{H}) \mathrm{.}
\label{eq:grapheq}
\end{align}
Again, $G(H)$ is the frustration graph of $H$. 

From the canonical anticommutation relations, Eq.~(\ref{eq:car}), and the distribution rule Eq.~(\ref{eq:distribution}), we see that scalar commutators between quadratic Majorana-fermion operators are given by
\begin{align}
\scomm{\gamma_{\mu} \gamma_{\nu}}{\gamma_{\alpha} \gamma_{\beta}} = (-1)^{|(\mu, \nu) \cap (\alpha, \beta)|}
\label{eq:pairscomm}
\end{align}
Eqs.~(\ref{eq:grapheq}) and (\ref{eq:pairscomm}) can be restated graph theoretically as saying that $G(H)$ is the graph whose vertex set is the edge set of the fermion hopping graph $R$, and vertices of $G(H)$ are neighboring if and only if the associated edges of $R$ share exactly one vertex. 
Such a graph is called the \emph{line graph} of $R$.

\begin{definition}[Line Graphs]
The line graph $L(R) \equiv (E, F)$ of a root graph $R \equiv (V, E)$ is the graph whose vertex set is the edge set of $R$ and whose edge set is given by
\begin{align}
F \equiv \{(e_1, e_2) \ | \ e_1, e_2 \in E, \ |e_{1} \cap e_{2}| = 1\}
\end{align} 
That is, vertices are neighboring in $L(R)$ if the corresponding edges in $R$ are incident at a vertex.
\end{definition}

\noindent Notice that if $L(R)$ is connected if and only if $R$ is. 
With these definitions in-hand, our first main result can be stated simply as

\begin{theorem}[Existence of free-fermion solution] 
An injective map $\phi$ as defined in Eq.~(\ref{eq:paulitopair}) and Eq.~(\ref{eq:scommeq}) exists for the Hamiltonian $H$ as defined in Eq.~(\ref{eq:hdef}) if and only if there exists a root graph $R$ such that
\begin{align}
    G(H) \simeq L(R),
\label{eq:ffsolution}
\end{align} 
where R is the hopping graph of the free-fermion solution.
\label{thm:ffsolution}
\end{theorem}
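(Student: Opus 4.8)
The plan is to prove both implications by treating the map $\phi$ purely as combinatorial data and using the two scalar-commutator formulas already derived, Eq.~(\ref{eq:symplscomm}) for Pauli operators and Eq.~(\ref{eq:pairscomm}) for quadratic Majorana operators, as a dictionary between ``anticommuting'' and ``sharing exactly one endpoint.'' Since both relations are $\mathds{Z}_2$-valued, the whole statement reduces to matching up two notions of graph adjacency, and the defining feature of a line graph is precisely that its adjacency is ``edges of $R$ meeting in one vertex.''

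For the direction ($\Rightarrow$), suppose an injective $\phi$ satisfying Eqs.~(\ref{eq:paulitopair}) and (\ref{eq:scommeq}) is given. I would first observe that $\phi_1(\boldsymbol{j}) \neq \phi_2(\boldsymbol{j})$ for each $\boldsymbol{j} \in V$: if they were equal, the image $i\gamma_{\phi_1(\boldsymbol{j})}\gamma_{\phi_2(\boldsymbol{j})} = iI$ would commute with everything, whereas $\sigma^{\boldsymbol{j}}$ anticommutes with at least one other term (its neighbor in the connected graph $G(H)$; the case $|V|=1$ is trivial, with $R = K_2$), contradicting Eq.~(\ref{eq:scommeq}). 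So each $\phi(\boldsymbol{j})$ is a genuine two-element set $e(\boldsymbol{j}) \equiv \{\phi_1(\boldsymbol{j}), \phi_2(\boldsymbol{j})\}$, and injectivity makes $\boldsymbol{j} \mapsto e(\boldsymbol{j})$ a bijection from $V$ onto a set $\widetilde{E}$ of two-element subsets of $\widetilde{V}$; let $R \equiv (\widetilde{V}, \widetilde{E})$. Combining Eqs.~(\ref{eq:scommeq}), (\ref{eq:symplscomm}) and (\ref{eq:pairscomm}), $\sigma^{\boldsymbol{j}}$ and $\sigma^{\boldsymbol{k}}$ anticommute iff $|e(\boldsymbol{j}) \cap e(\boldsymbol{k})| = 1$, i.e.\ iff $e(\boldsymbol{j})$ and $e(\boldsymbol{k})$ are incident in $R$; thus the bijection is a graph isomorphism $G(H) \simeq L(R)$, as claimed.

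For the direction ($\Leftarrow$), suppose a root graph $R \equiv (\widetilde{V}, \widetilde{E})$ and an isomorphism $\psi \colon G(H) \to L(R)$ are given. Then $\psi$ sends each Pauli term $\sigma^{\boldsymbol{j}}$ to an edge $\psi(\boldsymbol{j}) \in \widetilde{E}$, a two-element subset of $\widetilde{V}$; choosing an ordering of its two endpoints---equivalently, an orientation of $R$---defines $\phi_1(\boldsymbol{j}), \phi_2(\boldsymbol{j})$ and hence a map $\phi \colon V \to \widetilde{V}^{\times 2}$, which is injective since $\psi$ is a bijection. Fix any Majorana system $\{\gamma_\mu\}_{\mu \in \widetilde{V}}$ satisfying Eq.~(\ref{eq:car}) (for instance the first $|\widetilde{V}|$ operators of a Jordan-Wigner family), so that Eq.~(\ref{eq:paulitopair}) is meaningful. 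To verify Eq.~(\ref{eq:scommeq}): by Eq.~(\ref{eq:symplscomm}), $\scomm{\sigma^{\boldsymbol{j}}}{\sigma^{\boldsymbol{k}}} = -1$ iff $(\boldsymbol{j},\boldsymbol{k}) \in E$, iff $\psi(\boldsymbol{j})$ and $\psi(\boldsymbol{k})$ are adjacent in $L(R)$, iff $|\psi(\boldsymbol{j}) \cap \psi(\boldsymbol{k})| = 1$ (the intersection has size $2$ only when $\boldsymbol{j} = \boldsymbol{k}$, and $0$ otherwise when they are non-adjacent), which by Eq.~(\ref{eq:pairscomm}) is exactly $\scomm{\gamma_{\phi_1(\boldsymbol{j})}\gamma_{\phi_2(\boldsymbol{j})}}{\gamma_{\phi_1(\boldsymbol{k})}\gamma_{\phi_2(\boldsymbol{k})}} = -1$; the $+1$ cases agree for the same reason. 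Then $\widetilde{H} \equiv \sum_{\boldsymbol{j} \in V} h_{\boldsymbol{j}}\, i\gamma_{\phi_1(\boldsymbol{j})}\gamma_{\phi_2(\boldsymbol{j})}$ is of the form Eq.~(\ref{eq:ffhamiltonian}) with hopping graph $R$, and $G(H) \simeq G(\widetilde{H})$ as in Eq.~(\ref{eq:grapheq}).

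The proof has no hard analytic core; the one point that genuinely needs care is the boundary behavior of the count $|e(\boldsymbol{j}) \cap e(\boldsymbol{k})| \in \{0,1,2\}$, and in particular that the value $2$ occurs only for $\boldsymbol{j} = \boldsymbol{k}$---which is what makes the $\mathds{Z}_2$-valued scalar commutator faithfully encode incidence in $R$ rather than some coarser relation, and which is exactly where $\phi_1(\boldsymbol{j}) \neq \phi_2(\boldsymbol{j})$ and the injectivity of $\phi$ are used. I would also remark that the orientation chosen in the $(\Leftarrow)$ direction is irrelevant to Eq.~(\ref{eq:scommeq}), since $i\gamma_\mu\gamma_\nu$ and $i\gamma_\nu\gamma_\mu$ have identical scalar commutators with every operator; pinning it down to obtain a definite sign for $\widetilde{H}$ (and hence the spectrum) is what the later treatment of orientations addresses.
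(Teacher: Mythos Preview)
Your proof is correct and follows essentially the same line as the paper's: both directions amount to matching the anticommutation relation in $G(H)$ with the edge-incidence relation in $R$ via Eqs.~(\ref{eq:symplscomm}) and (\ref{eq:pairscomm}). The only presentational difference is in the $(\Leftarrow)$ direction: the paper invokes the Krausz clique decomposition of $G(H)$ to reconstruct the vertices of $R$ and then defines $\phi(\boldsymbol{j})$ as the pair of cliques containing $\boldsymbol{j}$, whereas you use the assumed isomorphism $\psi$ directly to read off the endpoints---these are equivalent, and your extra care with the cases $\phi_1(\boldsymbol{j}) = \phi_2(\boldsymbol{j})$ and $|e(\boldsymbol{j}) \cap e(\boldsymbol{k})| = 2$ is a welcome addition the paper leaves implicit.
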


\begin{table*}
\centering
\setcellgapes{3pt}
\makegapedcells
\begin{tabular} {c c c}
\toprule
& With Twins & Without twins \\ \midrule
\makecell{Forbidden\\ Graphs} & \makecell{(a)\vspace{16mm}} \makecell{\includegraphics[width=0.4\textwidth]{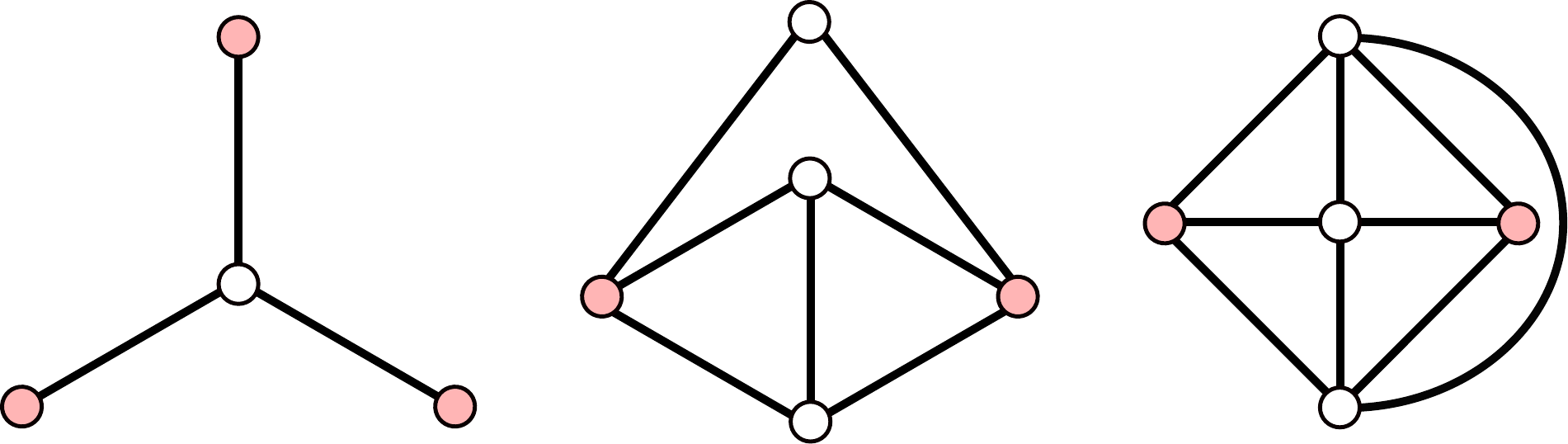}} & \makecell{(d)\vspace{16mm}} \hspace{5mm} \makecell{\multirow{2}{*}{\includegraphics[width=0.3\textwidth]{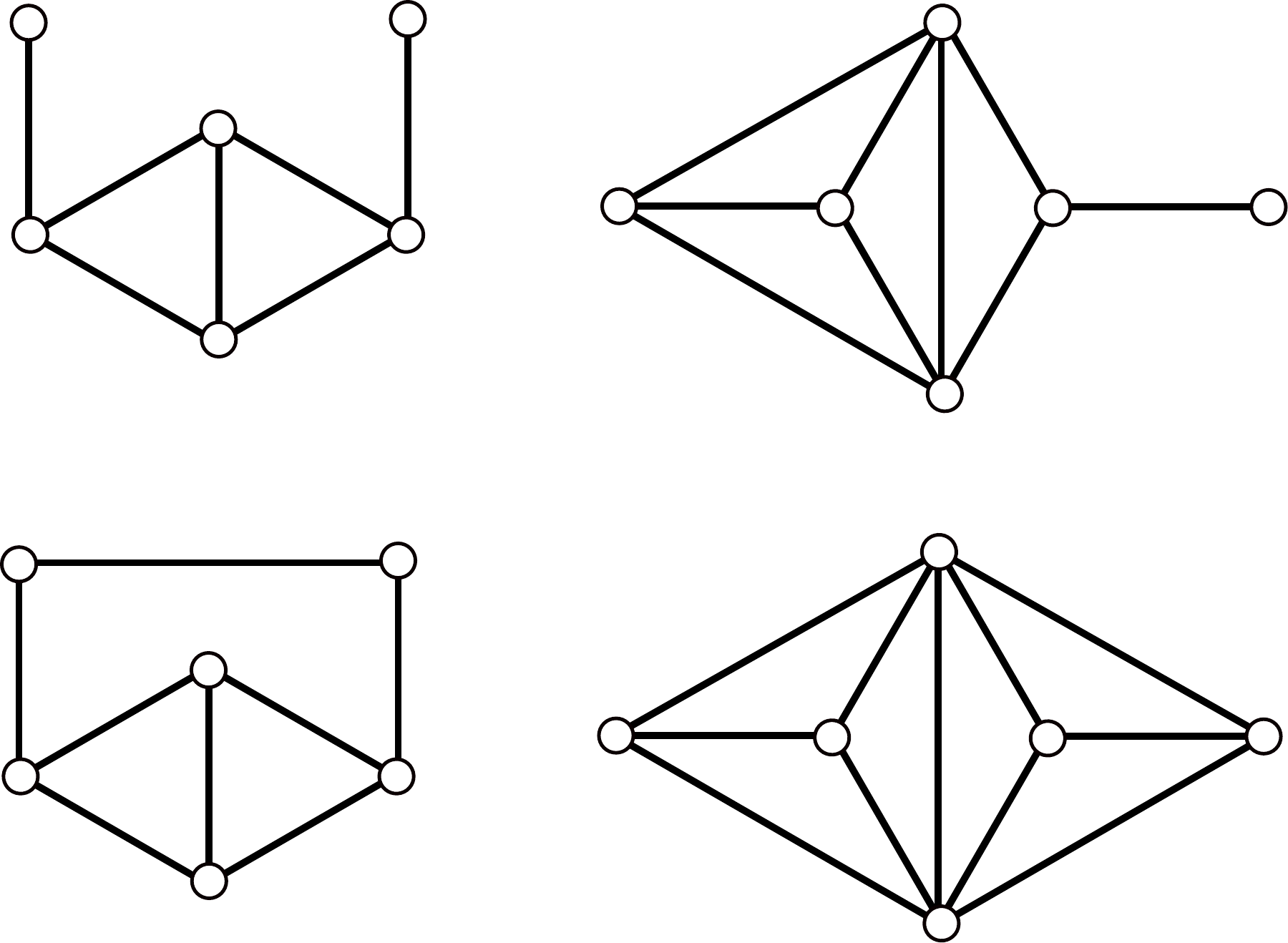}} \vspace{14mm}} \\ \cline{1-2}
\makecell{Twin-Free,\\ $L(R)$} &\makecell{(b)\vspace{10mm}} \hspace{8mm} \makecell{\includegraphics[width=0.33\textwidth]{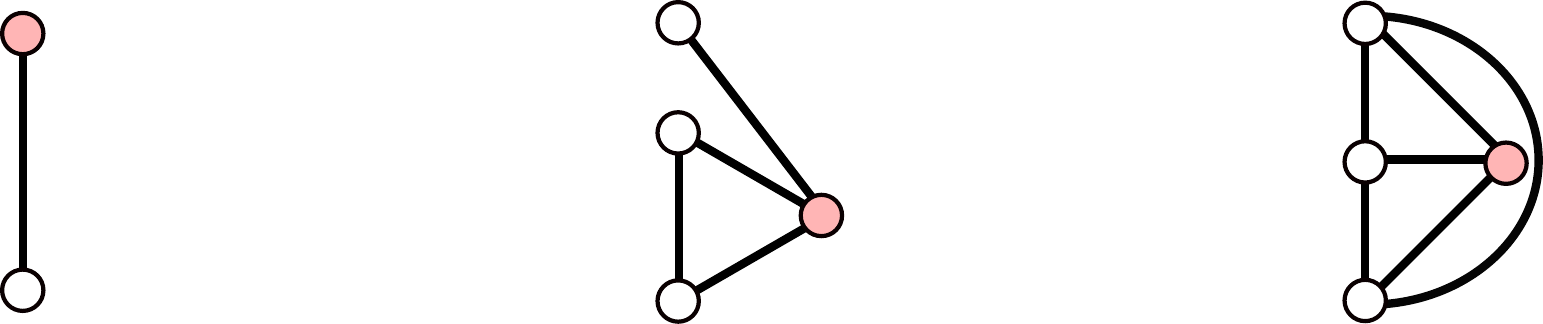}} \hspace{2mm} & \\ \cline{1-2}
\makecell{Root $R$} & \makecell{(c)\vspace{10mm}} \hspace{4mm} \makecell{\includegraphics[width=0.35\textwidth]{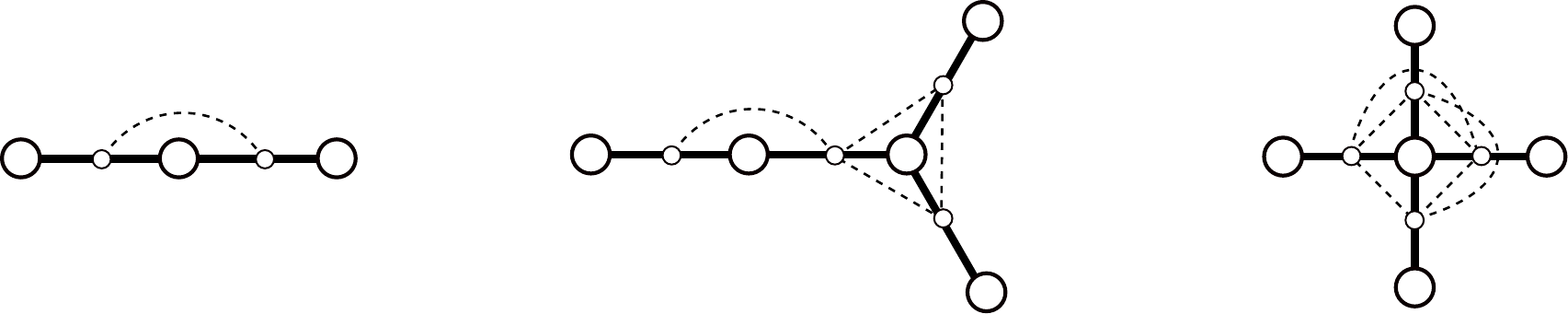}} \hspace{2mm} & \hspace{-5mm} \makecell{(e)\vspace{16mm}} \hspace{5mm}  \makecell{\vspace{-5mm}\includegraphics[width=0.29\textwidth]{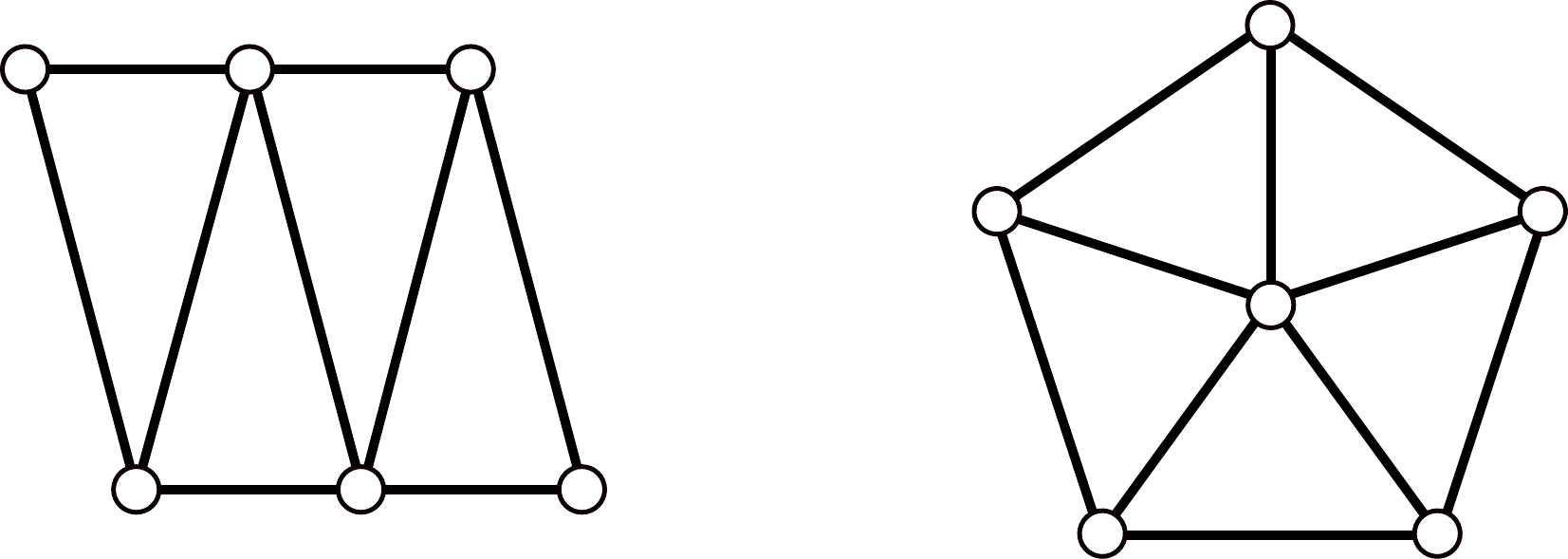}} \\ \bottomrule
\end{tabular}
\caption{A graph is a line graph if and only if it does not contain any of the nine forbidden graphs in (a), (d), and (e) as an induced subgraph \cite{beineke1970characterizations}. 
Of these nine graphs, the three in (a) contain twin vertices, highlighted. 
If these three graphs are induced subgraphs of a frustration graph such that these highlighted vertices are twins in the larger graph, then the twins can be removed by restricting onto a fixed mutual eigenspace of their products, which correspond to constants of motion of the Hamiltonian. 
(b) The twin-free restrictions of the graphs in (a), with all but one highlighted vertex from (a) removed. 
These graphs are the line graphs of the graphs in (c). 
In Ref.~\cite{soltes1994forbidden}, it was shown that only five graphs contain the forbidden subgraphs in (e) and none of those in (a) or (d). 
Finally, this set was further refined in Ref.~\cite{yang2002three} to a set of three forbidden subgraphs for 3-connected line graphs of minimum degree at least seven, though we do not display these graphs here.}
\label{tab:forbidden}
\end{table*}

\begin{proof}
The proof can be found in Section~\ref{sec:thm1proof}. 
\end{proof}

The intuition for this result is that the root graph $R$ is the graph where the vertices are fermions and the edges are the bilinears that appear in the Hamiltonian $H$. 
The result reveals a correspondence between a characterization of line graphs and a characterization of free-fermion spin models, as not every graph can be expressed as the line graph of some root. 
We must however note that, strictly speaking, the existence of this mapping alone does not guarantee a free-fermion solution, since the ``Lie-homomorphism" constraint, Eq.~(\ref{eq:scommeq}), does not fix the \emph{sign} of the terms in the free-fermion Hamiltonian. 
Choosing a sign for each term is equivalent to \emph{orienting} the root graph, since multiplying by a sign is equivalent to making the exchange $\phi_{1}(\boldsymbol{j}) \leftrightarrow \phi_{2}(\boldsymbol{j})$ in Eq.~(\ref{eq:paulitopair}). 
Different orientations may not faithfully reproduce the properties of $H$, but we will see that such an orientation can always be chosen. 
The line graph condition in Eq.~(\ref{eq:ffsolution}) is therefore necessary and sufficient for a free-fermion solution to exist. 
Before turning to further implications of Theorem~\ref{thm:ffsolution}, let us first detail some properties of line graphs.

Line graphs are closely related to so-called \emph{intersection graphs}, originally studied by Erd\H{o}s \cite{erdos1966representation} and others (see, for example, Ref.~\cite{harary1971graph}). 
An intersection graph $G \equiv (V, E)$ is a graph whose vertex set, $V \subseteq 2^{S}$, consists of distinct subsets of some set $S$. 
Two vertices, $u$ and $v$, are neighboring in $G$ if their intersection is nonempty ($|v \cap w| \neq 0$). 
A line graph is a special case of an intersection graph where every vertex corresponds to a subset of size at most two. 
When we specify that $\phi$ be injective, we are requiring that no distinct vertices have identical subsets, and our definition of a free-fermion solution Eq.~(\ref{eq:ffsolution}) identically coincides with that of a line graph. 
Since terms in $H$ can thus intersect by at most one Majorana mode, collections of terms containing a given mode are all neighboring in $G(H)$, so this mode corresponds to a \emph{clique}, or complete subgraph, of $G(H)$. 
This characterization of line graphs was first given by Krausz \cite{krausz1943demonstration} and bears stating formally.

\begin{definition}[Krausz decomposition of line graphs]
Given a line graph $G \simeq L(R)$, there exists a partition of the edges of $G$ into cliques such that every vertex appears in at most two cliques.
\end{definition} 

\noindent Cliques in $G(H)$ can therefore be identified with the individual Majorana modes in a free-fermion solution of $H$. 
If a term belongs to only one clique, we can  ensure our resulting fermion Hamiltonian is quadratic by taking the second clique for that term to be a clique of no edges, as we will see in several examples below. 
The existence of a Krausz decomposition is utilized in a linear-time algorithm to recognize line graphs by Roussopoulos \cite{roussopoulos1973max}, though the earliest such algorithm for line-graph recognition was given by Lehot \cite{lehot1974optimal}. 
A dynamic solution was later given by Degiorgi and Simon \cite{degiorgi1995dynamic}. 
These algorithms are optimal and constructive, and so can be applied to a given spin model to provide an exact free-fermion solution.

We next turn to the \emph{hereditary property} of line graphs, for which we require the following definition:

\begin{definition}[Induced subgraphs]
Given a graph $G \equiv (V, E)$, an induced subgraph of $G$ by a subset of vertices $V^{\prime} \subset V$, is a graph $G[V^{\prime}] \equiv (V^{\prime}, E^{\prime})$ such that for any pair of vertices $u$, $v \in V^{\prime}$, $(u, v) \in E^{\prime}$ if and only if $(u, v) \in E$ in $G$.
\end{definition} 

\noindent An induced subgraph of $G$ can be constructed by removing the subset of vertices $V/V^{\prime}$ from $G$, together with all edges incident to any vertex in this subset. 
Line graphs are a \emph{hereditary class} of graphs in the sense that any induced subgraph of a line graph is also a line graph. 
This coincides with our intuition that removing a term from a free-fermion Hamiltonian does not change its free-fermion solvability. 
Conversely, Hamiltonians for which no free-fermion solution exists are accompanied by ``pathological" structures in their frustration graphs, which obstruct a free-fermion description no matter how we try to impose one. 
This is captured by the forbidden subgraph characterization of Beineke \cite{beineke1970characterizations} and later refined by others \cite{soltes1994forbidden, yang2002three}.

\begin{corollary}[Beineke no-go theorem]
A given spin Hamiltonian $H$ has a free-fermion solution if and only if its frustration graph $G(H)$ does not contain any of nine forbidden subgraphs, shown in Table \ref{tab:forbidden}, (a) (d) (e), as an induced subgraph.
\label{cor:nogo}
\end{corollary}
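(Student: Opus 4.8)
The plan is to obtain the corollary as an immediate consequence of Theorem~\ref{thm:ffsolution} together with the classical forbidden-subgraph characterization of line graphs due to Beineke \cite{beineke1970characterizations}. Theorem~\ref{thm:ffsolution} already identifies the existence of a term-by-term (injective) free-fermion solution of $H$ with the single graph-theoretic condition that $G(H)$ be the line graph of some root graph $R$; combined with the remark following that theorem that an orientation of $R$ faithfully reproducing the physics of $H$ can always be chosen, the statements ``$H$ has a free-fermion solution'' and ``$G(H)$ is a line graph'' are synonymous. It therefore only remains to rephrase ``$G(H)$ is a line graph'' in terms of excluded structures, and both inputs needed for this are already available.

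For the forward direction, suppose $H$ has a free-fermion solution. By Theorem~\ref{thm:ffsolution}, $G(H)\simeq L(R)$ for some $R$. None of the nine graphs in Table~\ref{tab:forbidden}(a), (d), (e) is itself a line graph, and since the class of line graphs is hereditary---any induced subgraph of a line graph is again a line graph---the graph $G(H)$ cannot contain any of the nine as an induced subgraph. For the converse, suppose $G(H)$ contains none of the nine graphs as an induced subgraph. This is exactly the hypothesis of Beineke's theorem \cite{beineke1970characterizations}, whose nontrivial content is that these nine exclusions are also sufficient: $G(H)$ is then a line graph, so $G(H)\simeq L(R)$ for some root graph $R$, and Theorem~\ref{thm:ffsolution} (with the orientation chosen as above) produces the free-fermion solution. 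This closes the equivalence.

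The only genuine care required---and I do not expect a substantive obstacle, since the work is carried entirely by Theorem~\ref{thm:ffsolution} and the cited Beineke theorem---is to be precise about the notion of ``solution'' in play. The corollary concerns the injective free-fermionization of Theorem~\ref{thm:ffsolution}, valid on the full Hilbert space, and for that notion the nine graphs are exactly the obstructions. If one instead permits a non-injective mapping defined on a subspace obtained by fixing constants of motion---the situation when $G(H)$ possesses twin vertices---then a frustration graph containing one of the three twin-bearing forbidden graphs of Table~\ref{tab:forbidden}(a) may still reduce to a line graph once the twins are eliminated, so those three cease to be genuine obstructions in that weaker sense. That refinement is handled separately by the twin-reduction lemma, and one may optionally record the sharper statement that after twin reduction the effective obstructions are the graphs of Table~\ref{tab:forbidden}(d), (e) together with the twin-free versions in Table~\ref{tab:forbidden}(b)/(c); the further reductions of Refs.~\cite{soltes1994forbidden, yang2002three} then apply under the stated connectivity and minimum-degree hypotheses.
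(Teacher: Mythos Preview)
Your proposal is correct and follows essentially the same approach as the paper: the corollary is obtained immediately by combining Theorem~\ref{thm:ffsolution} with Beineke's forbidden-subgraph characterization of line graphs \cite{beineke1970characterizations}, and the paper does not supply any argument beyond this. Your additional care in distinguishing the injective free-fermionization from the twin-reduced non-injective case is accurate and, if anything, more explicit than the paper's own treatment.
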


\noindent These forbidden subgraphs above can be interpreted as collections of ``frustrating" terms. 
At least one of the terms must be assigned to a fermion interaction in every possible assignment from Pauli operators to fermions. 
Correspondingly, ignoring these terms by removing their corresponding vertices from the frustration graph may remove a forbidden subgraph and cause the Hamiltonian to become solvable. 
The terms which we need to remove in this way need not be unique. 
In the next section, we discuss one such strategy for removing vertices such that our solution will remain faithful to the original spin Hamiltonian by exploiting symmetries.

\section{Symmetries}
\label{sec:symmetries}

An important class of symmetries involves \textit{twin vertices} in the frustration graph.

\begin{definition}[Twin Vertices]
Given a graph $G \equiv (V, E)$, vertices $u$, $v \in V$ are twin vertices if, for every vertex $w \in V$, $(u, w) \in E$ if and only if $(v, w) \in E$. 
\end{definition}

\noindent Twin vertices have exactly the same neighborhood, and are thus never neighbors in a frustration graph, which contains no self edges due to the fact that every operator commutes with itself. 
Sets of twin vertices are the subject of our first lemma.

\begin{lemma}[Twin vertices are constants of motion]
Suppose a pair of terms $\sigma^{\boldsymbol{j}}$ and $\sigma^{\boldsymbol{k}}$ in $H$ correspond to twin vertices in $G(H)$, then the product $\sigma^{\boldsymbol{j}} \sigma^{\boldsymbol{k}}$ is a nontrivial Pauli operator commuting with every term in the Hamiltonian. 
Distinct such products therefore commute with each other.
\label{lem:twinvertices}
\end{lemma}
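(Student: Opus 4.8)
The plan is to work entirely at the level of the scalar commutator $\scomm{\cdot}{\cdot}$, exploiting its multiplicativity in each argument, Eq.~(\ref{eq:distribution}), together with the identification $\scomm{\sigma^{\boldsymbol m}}{\sigma^{\boldsymbol l}} = +1 \iff (\boldsymbol m,\boldsymbol l)\notin E$. First I would record the elementary structural facts about the product $P \equiv \sigma^{\boldsymbol j}\sigma^{\boldsymbol k}$. Because $\sigma^{\boldsymbol j}$ and $\sigma^{\boldsymbol k}$ are twins they have the same neighborhood, and since $G(H)$ has no self-loops (every Pauli commutes with itself), a vertex is never its own neighbor; hence $\boldsymbol j$ and $\boldsymbol k$ are non-adjacent, i.e.\ $\scomm{\sigma^{\boldsymbol j}}{\sigma^{\boldsymbol k}}=+1$, so the two operators commute. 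Then $P^{\dagger} = \sigma^{\boldsymbol k}\sigma^{\boldsymbol j} = \sigma^{\boldsymbol j}\sigma^{\boldsymbol k} = P$, so $P$ is a Hermitian element of $\mathcal{P}$, i.e.\ a genuine (signed) Pauli operator, and it is nontrivial: distinct vertices of $G(H)$ are distinct Pauli operators, and $\sigma^{\boldsymbol j}\sigma^{\boldsymbol k}\propto I$ would force $\sigma^{\boldsymbol j}=\sigma^{\boldsymbol k}$ since each squares to $I$.

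Next I would show $P$ commutes with every term of $H$. Fix an arbitrary $\boldsymbol l\in V$ and use Eq.~(\ref{eq:distribution}) to write $\scomm{P}{\sigma^{\boldsymbol l}} = \scomm{\sigma^{\boldsymbol j}}{\sigma^{\boldsymbol l}}\,\scomm{\sigma^{\boldsymbol k}}{\sigma^{\boldsymbol l}}$, then split into cases. If $\boldsymbol l\notin\{\boldsymbol j,\boldsymbol k\}$, the twin condition states $(\boldsymbol j,\boldsymbol l)\in E \iff (\boldsymbol k,\boldsymbol l)\in E$, so the two factors coincide and their product is $+1$. If $\boldsymbol l=\boldsymbol j$ (and symmetrically $\boldsymbol l=\boldsymbol k$), one factor is $\scomm{\sigma^{\boldsymbol j}}{\sigma^{\boldsymbol j}}=+1$ and the other is $\scomm{\sigma^{\boldsymbol k}}{\sigma^{\boldsymbol j}}=+1$ by the non-adjacency just established. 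In every case $\scomm{P}{\sigma^{\boldsymbol l}}=+1$, hence $[P,\sigma^{\boldsymbol l}]=0$; summing over the terms of $H$ gives $[P,H]=0$, so $P$ is a constant of motion.

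Finally, for two twin-pair products $P=\sigma^{\boldsymbol j}\sigma^{\boldsymbol k}$ and $Q=\sigma^{\boldsymbol p}\sigma^{\boldsymbol q}$, I would observe that $\boldsymbol p,\boldsymbol q\in V$ are themselves terms of $H$, so the previous paragraph already gives $\scomm{P}{\sigma^{\boldsymbol p}}=\scomm{P}{\sigma^{\boldsymbol q}}=+1$; one more application of Eq.~(\ref{eq:distribution}) yields $\scomm{P}{Q}=\scomm{P}{\sigma^{\boldsymbol p}}\,\scomm{P}{\sigma^{\boldsymbol q}}=+1$, i.e.\ $[P,Q]=0$. This closes the argument.

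There is no deep obstacle here; the statement follows from multiplicativity of the scalar commutator plus the definition of twins. The only points requiring care are (i) handling the degenerate case $\boldsymbol l\in\{\boldsymbol j,\boldsymbol k\}$ in the commutation check — which is precisely why one first needs the remark that twins are non-adjacent — and (ii) making the word "nontrivial'' precise, namely ruling out $\sigma^{\boldsymbol j}\sigma^{\boldsymbol k}\propto I$, which uses that the vertices are distinct as Pauli operators.
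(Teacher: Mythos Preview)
Your proposal is correct and follows essentially the same approach as the paper's proof: both argue that since twins share a neighborhood, every Hamiltonian term either commutes with both $\sigma^{\boldsymbol j}$ and $\sigma^{\boldsymbol k}$ or anticommutes with both, hence commutes with their product; nontriviality follows from distinctness of the vertices; and mutual commutation of such products follows because each is a product of Hamiltonian terms. Your version is simply more explicit, spelling out the scalar-commutator algebra via Eq.~(\ref{eq:distribution}) and separately handling the case $\boldsymbol l\in\{\boldsymbol j,\boldsymbol k\}$, whereas the paper dispatches this in a single sentence.
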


\begin{proof}
The statement follows straightforwardly from the definition of twin vertices: every term in $H$ (including $\sigma^{\boldsymbol{j}}$ and $\sigma^{\boldsymbol{k}}$ themselves) either commutes with both $\sigma^{\boldsymbol{j}}$ and $\sigma^{\boldsymbol{k}}$ or anticommutes with both of these operators. 
Terms in $H$ therefore always commute with the product $\sigma^{\boldsymbol{j}}\sigma^{\boldsymbol{k}}$. 
This product is furthermore a nontrivial Pauli operator, for if $\sigma^{\boldsymbol{j}} \sigma^{\boldsymbol{k}} = I$, then $\boldsymbol{j} = \boldsymbol{k}$, and we would not identify these Paulis with distinct vertices in $G(H)$. 
Constants of motion generated this way must commute with one another, since they commute with every term in the Hamiltonian and are themselves products of Hamiltonian terms. 
They therefore generate an abelian subgroup of the symmetry group of the Hamiltonian. 
\end{proof}

Let the symmetry subgroup generated by products of twin vertices in this way be denoted $\mathcal{S}$. 
We can leverage these symmetries to remove twin vertices from the frustration graph $G(H)$. 
To do this, choose a minimal generating set $\{\sigma^{\boldsymbol{s}}\}$ of Pauli operators for $\mathcal{S}$ and choose a $\pm 1$ eigenspace for each. 
Let $(-1)^{x_{\boldsymbol{s}}}$ be the eigenvalue associated to the generator $\sigma^{\boldsymbol{s}} \in \mathcal{S}$, for $x_{\boldsymbol{s}} \in \{0, 1\}$. 
We restrict to the subspace defined as the mutual $+1$ eigenspace of the stabilizer group
\begin{align}
    \mathcal{S}_{\boldsymbol{x}} = \langle (-1)^{x_{\boldsymbol{s}}} \sigma^{\boldsymbol{s}} \rangle
\end{align}
For a pair of twin vertices corresponding to Hamiltonian terms $\sigma^{\boldsymbol{j}}$ and $\sigma^{\boldsymbol{k}}$, we let
\begin{align}
    \sigma^{\boldsymbol{j}} \sigma^{\boldsymbol{k}} \equiv (-1)^{d_{\boldsymbol{j}, \boldsymbol{k}}} \left[\prod_{\boldsymbol{s} \in S_{\boldsymbol{j}, \boldsymbol{k}}} (-1)^{x_{\boldsymbol{s}}} \sigma^{\boldsymbol{s}} \right]
\end{align}
where $d_{\boldsymbol{j}, \boldsymbol{k}} \in \{0, 1\}$ specifies the appropriate sign factor, and $S_{\boldsymbol{j}, \boldsymbol{k}}$ is the subset of generators of $\mathcal{S}$ such that
\begin{align}
    \bigoplus_{\boldsymbol{s} \in S_{\boldsymbol{j}, \boldsymbol{k}}} \boldsymbol{s} = \boldsymbol{j} \oplus \boldsymbol{k}
\end{align}
where ``$\oplus$'' denotes addition modulo 2 here.
In the stabilizer subspace of $\mathcal{S}_{\boldsymbol{x}}$, we can make the substitution
\begin{align}
    \sigma^{\boldsymbol{k}} \rightarrow (-1)^{d_{\boldsymbol{j}, \boldsymbol{k}}} \sigma^{\boldsymbol{j}}
\end{align}
effectively removing the vertex $\boldsymbol{k}$ from $G(H)$.

Twin vertices capture the cases where a free-fermion solution for $H$ exists, but is necessarily non-injective. 
Indeed, note that we are careful in our statement of Theorem \ref{thm:ffsolution} to specify that our condition Eq.~(\ref{eq:ffsolution}) is necessary and sufficient when $\phi$ is injective. 
If we instead relax our requirement that vertices of a line graph correspond to distinct subsets of size two in our earlier discussion of intersection graphs, then we are allowing for line graphs of graphs with multiple edges, or \emph{multigraphs}. 
However, our definition of $G(\widetilde{H})$ will differ from the line graph of a multigraph for pairs of vertices corresponding to identical edges, which must be adjacent in the line graph of a multigraph, but will be nonadjacent in $G(\widetilde{H})$ from Eq.~(\ref{eq:pairscomm}). 
Such vertices will nevertheless be twin vertices in $G(\widetilde{H})$ due to the graph-isomorphism constraint, Eq.~(\ref{eq:grapheq}). 
Therefore, if no \emph{injective} mapping $\phi$ satisfying Theorem \ref{thm:ffsolution} exists, a many-to-one free-fermion solution exists only when twin vertices are present. 
Lemma~\ref{lem:twinvertices} allows us to deal with this non-injective case by removing twin vertices until we obtain the line graph of a simple graph when possible. The particular way we choose to perform this removal cannot affect the overall solvability of the model, since the frustration graph with all twin vertices removed is an induced subgraph of any frustration graph with only a proper subset of such vertices removed. A model which is solvable by free fermions this way is therefore solvable in all of its symmetry sectors.

Finally, we can see when a non-injective free-fermion solution may be possible from the forbidden subgraph characterization, Corollary~\ref{cor:nogo}. 
As seen in Table~\ref{tab:forbidden}, some of the forbidden subgraphs shown in (a) themselves contain twin vertices. 
If these forbidden subgraphs are connected to the global frustration graph such that their twin vertices remain twins in the larger graph, then they may be removed, possibly allowing for a solution of the full Hamiltonian by free fermions. 
When the twins are removed from the forbidden subgraphs, they become line graphs as shown in Table \ref{tab:forbidden} (b) and (c). An example of a model which can be solved this way is the Heisenberg-Ising model introduced in Ref.~\cite{lieb1961two}.

We next proceed to identify the remaining Pauli symmetries for a Hamiltonian satisfying Theorem \ref{thm:ffsolution}. 
For this, we invoke the natural partition of the Pauli group $\mathcal{P}$ into the subgroup $\mathcal{P}_{H}$, again defined as that generated by Hamiltonian terms $\{\sigma^{\boldsymbol{j}}\}_{\boldsymbol{j} \in V}$, and the Pauli operators outside this subgroup, $\mathcal{P}_{\perp} \equiv \mathcal{P}/\mathcal{P}_{H}$. 
Note that the latter set does not form a group in general, as for example, single-qubit Paulis may be outside of $\mathcal{P}_H$ yet may be multiplied to operators in $\mathcal{P}_H$. 
A subgroup of the symmetries of the Hamiltonian is the center $\mathcal{Z}(\mathcal{P}_H)$ of $\mathcal{P}_H$, the set of $n$-qubit Pauli operators in $\mathcal{P}_H$ which commute with every element of $\mathcal{P}_H$ and therefore with every term in the Hamiltonian. 
To characterize this group, we need two more definitions.

\begin{definition}[Cycle subgroup]
A cycle of a graph $G \equiv (V, E)$ is a subset of its edges, $Y \subseteq E$, such that every vertex contains an even number of incident edges from the subset. 
If a Pauli Hamiltonian satisfies Eq.~(\ref{eq:ffsolution}) for some root graph $R$, we define its cycle subgroup $Z_H \subseteq \mathcal{P}_H$ as the abelian Pauli subgroup generated by the cycles $\{Y_i\}_i$ of $R$,
\begin{align}
Z_H = \bigl\langle \Pi_{\{\boldsymbol{j}| \phi(\boldsymbol{j}) \in Y_i \}} \sigma^{\boldsymbol{j}} \bigr\rangle_i.
\label{eq:cyclesubgroup}
\end{align}   
\end{definition}

\noindent Since
\begin{align}
    \prod_{\{\boldsymbol{j}| \phi(\boldsymbol{j}) \in Y_i \}} \gamma_{\phi_1(\boldsymbol{j})} \gamma_{\phi_{2}(\boldsymbol{j})} = \pm I  
\end{align} 
we have, from Eq.~(\ref{eq:scommeq}) and the definition of $\phi$, that the elements of $Z_H$ commute with every term in the Hamiltonian and thus with each other (since they are products of Hamiltonian terms). 
That is, $Z_{H} \subseteq \mathcal{Z}(\mathcal{P}_H)$. 
Notice that the definition of the generators for $Z_H$ in Eq.~(\ref{eq:cyclesubgroup}) may sometimes yield operators proportional to identity.

A familiar symmetry of free-fermion Hamiltonians is the \emph{parity operator} 
\begin{align}
    P \equiv i^{\frac{1}{2}|\widetilde{V}| (|\widetilde{V}| - 1)}\prod_{k \in \widetilde{V}} \gamma_k 
\label{eq:fermparitydef}
\end{align}
which commutes with every term in the Hamiltonian since each term is quadratic in the Majorana modes. 
The phase factor is chosen such that $P$ is Hermitian. 
Here, we define this operator in terms of Pauli Hamiltonian terms through a combinatorial structure known as a T-join.

\begin{definition}[Parity operator]
A T-join of a graph $G \equiv (V, E)$ is a subset of edges, $T \subseteq E$, such that an odd number of edges from $T$ is incident to every vertex in $V$. 
If a Pauli Hamiltonian satisfies Eq.~(\ref{eq:ffsolution}) for some root graph $R$ such that the number of vertices in $R$ is even, we define the parity operator as
\begin{align}
    P \equiv  i^d \prod_{\boldsymbol{j} \in T} \sigma^{\boldsymbol{j}}
\label{eq:paritydef}
\end{align}
where the product is taken over a T-join of $R$, and $d \in \{0, 1, 2, 3\}$ specifies the phase necessary to agree with Eq.~(\ref{eq:fermparitydef}).
\end{definition}

\noindent Here we have
\begin{align}
    i^d \prod_{\boldsymbol{j} \in T} i \gamma_{\phi_1(\boldsymbol{j})}  \gamma_{\phi_2 (\boldsymbol{j})} &= i^{\frac{1}{2} |\widetilde{V}| (|\widetilde{V} | - 1)}\prod_{\mu \in \widetilde{V}} \gamma_{\mu} = P \mathrm{,}
\end{align} 
since every fermion mode will be hit an odd number of times in the T-join. 
Unlike with the cycle subgroup, $P$ is never proportional to the identity in the fermion description, though it may still be proportional to the identity in the Pauli description (up to stabilizer equivalences). 
In this case, only solutions for the free-fermion Hamiltonian in a fixed-parity subspace will be physical. We will see several examples of this in the next section.

When no T-join exists, we cannot form $P$ as a product of Hamiltonian terms. 
In fact, $P \in \mathcal{Z}(\mathcal{P}_H)$ only when $|\widetilde{V}|$ is even. 
Now with these definitions in hand, we are ready to state our second theorem.

\begin{theorem}[Symmetries are cycles and parity]
Given a Hamiltonian satisfying Eq.~(\ref{eq:ffsolution}) such that the number of vertices $|\widetilde{V}|$ in the root graph is odd, then we have
\begin{align}
    \mathcal{Z}(\mathcal{P}_H) = Z_{H}.
\end{align}
\noindent If the number of vertices in the root graph is even, then we have
\begin{align}
\mathcal{Z}(\mathcal{P}_H) = \left\langle Z_{H}, P \right\rangle.
\end{align}
\label{thm:symmetries}
\end{theorem}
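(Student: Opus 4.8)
The forward inclusions are already in hand: above we saw that products over cycles of $R$ map under $\phi$ to $\pm I$ and that the T-join product maps to the Majorana parity operator, so $Z_H\subseteq\mathcal{Z}(\mathcal{P}_H)$ always, and $P\in\mathcal{Z}(\mathcal{P}_H)$ whenever $|\widetilde{V}|$ is even (a connected $R$ on an even number of vertices does admit a T-join, so $P$ is genuinely a product of Hamiltonian terms). The plan is therefore to prove the reverse inclusions: every $g\in\mathcal{Z}(\mathcal{P}_H)$ lies in $Z_H$ when $|\widetilde{V}|$ is odd, and in $\langle Z_H,P\rangle$ when $|\widetilde{V}|$ is even.

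First I would fix $g\in\mathcal{Z}(\mathcal{P}_H)$ and, using that $\mathcal{P}_H$ is generated by $\{\sigma^{\boldsymbol{j}}\}_{\boldsymbol{j}\in V}$ and that each generator squares to $I$, write $g$ up to an overall phase as a product $\prod_{\boldsymbol{j}\in W}\sigma^{\boldsymbol{j}}$ over some subset $W\subseteq V$. Since $\phi$ is injective (Theorem~\ref{thm:ffsolution}) and $G(H)\simeq L(R)$, we may identify $V$ with the edge set of $R$, writing $e_{\boldsymbol{j}}\equiv\{\phi_1(\boldsymbol{j}),\phi_2(\boldsymbol{j})\}$; thus $W$ is a set of edges of $R$. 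By distributivity of the scalar commutator, Eq.~(\ref{eq:distribution}), together with Eqs.~(\ref{eq:scommeq}) and~(\ref{eq:pairscomm})---which give $\scomm{\sigma^{\boldsymbol{j}}}{\sigma^{\boldsymbol{k}}}=(-1)^{|e_{\boldsymbol{j}}\cap e_{\boldsymbol{k}}|}$---the element $g$ commutes with $\sigma^{\boldsymbol{k}}$ precisely when $\sum_{\boldsymbol{j}\in W}|e_{\boldsymbol{j}}\cap e_{\boldsymbol{k}}|$ is even. Writing $\deg_W(\mu)$ for the number of edges of $W$ incident to a vertex $\mu$, one has $\sum_{\boldsymbol{j}\in W}|e_{\boldsymbol{j}}\cap\{u,v\}|=\deg_W(u)+\deg_W(v)$ for an edge $e_{\boldsymbol{k}}=\{u,v\}$, so $g$ is central if and only if $\deg_W(u)\equiv\deg_W(v)$ modulo $2$ along every edge of $R$. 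Since $R$ is connected, this forces all of the degrees $\{\deg_W(\mu)\}_{\mu\in\widetilde{V}}$ to share a common parity $p\in\{0,1\}$.

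The two values of $p$ yield the two cases. If $p=0$, then $W$ meets every vertex an even number of times, i.e.\ $W$ is an element of the cycle space of $R$; expanding $W$ in a generating set of cycles and reordering the corresponding product (which only introduces $\pm 1$ phases) shows $\prod_{\boldsymbol{j}\in W}\sigma^{\boldsymbol{j}}$ equals a product of the generators of $Z_H$ from Eq.~(\ref{eq:cyclesubgroup}) up to a sign, and since $Z_H\subseteq\mathcal{Z}(\mathcal{P}_H)$ this pins down the remaining phase and gives $g\in Z_H$. If $p=1$, every vertex has odd degree in $W$, so $2|W|=\sum_{\mu}\deg_W(\mu)$ forces $|\widetilde{V}|$ to be even; this already completes the odd case. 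When $|\widetilde{V}|$ is even, any two edge sets with all-odd degrees differ by a cycle (their symmetric difference has all even degrees), so $W$ differs by a cycle from the T-join $T$ used to define $P$ in Eq.~(\ref{eq:paritydef}); hence $\prod_{\boldsymbol{j}\in W}\sigma^{\boldsymbol{j}}$ equals $P$ times an element of $Z_H$ up to phase, and again the phase is fixed by $P,Z_H\subseteq\mathcal{Z}(\mathcal{P}_H)$, so $g\in\langle Z_H,P\rangle$.

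The conceptual crux---and the step I would most want to get exactly right---is the reduction in the second paragraph that converts ``$g$ commutes with every Hamiltonian term'' into the parity condition $\deg_W(u)\equiv\deg_W(v)$ on the fermion hopping graph $R$; this is what transmutes a statement about the center of a Pauli group into one about vertex-degree parities, and it is the only place the line-graph structure Eq.~(\ref{eq:ffsolution}) is really used. Everything downstream is elementary graph theory---the handshake lemma, the cycle space, and the existence of all-odd-degree T-joins in connected graphs on an even number of vertices---together with routine bookkeeping of the $\pm 1$ Pauli reordering signs, which closes because $Z_H$ and (when applicable) $P$ already lie in $\mathcal{Z}(\mathcal{P}_H)$.
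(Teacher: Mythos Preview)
Your proof is correct and is essentially the same argument as the paper's, just phrased combinatorially rather than in matrix language: the paper uses the factorization $\mathbf{A}=\mathbf{B}\mathbf{B}^{\mathrm{T}}\ (\mathrm{mod}\ 2)$ of the line-graph adjacency matrix by the incidence matrix of $R$ and characterizes $\mathcal{Z}(\mathcal{P}_H)$ via $\ker\mathbf{A}=\{\mathbf{v}:\mathbf{B}^{\mathrm{T}}\mathbf{v}\in\ker\mathbf{B}\}$, noting that $\ker\mathbf{B}=\{\mathbf{0},\mathbf{1}\}$ for connected $R$---which is exactly your statement that all $\deg_W(\mu)$ share a common parity. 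The two cases ($\mathbf{B}^{\mathrm{T}}\mathbf{v}=\mathbf{0}$ giving cycles, $\mathbf{B}^{\mathrm{T}}\mathbf{v}=\mathbf{1}$ giving T-joins) and the handshake argument ruling out T-joins for odd $|\widetilde{V}|$ are identical in both.
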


\begin{proof}
The proof can be found in Section~\ref{sec:thm2proof}. 
\end{proof}

The Pauli symmetries of the Hamiltonian outside of $\mathcal{Z}(\mathcal{P}_H)$ may be thought of as ``logical" or ``gauge" qubits, and this characterization allows for a simple accounting of these qubits. 
Suppose we express a spin Hamiltonian $H$ on $n$ qubits as a free-fermion Hamiltonian on the hopping graph $R = (\widetilde{V}, \widetilde{E})$, and let $|\mathcal{Z}(\mathcal{P}_H)|$ be the number of independent generators of $\mathcal{Z}(\mathcal{P}_H)$. 
The number of logical qubits $n_L$ of the model is given by
\begin{align}
    n_L \equiv 
    \begin{cases}
        n - \left[\frac{1}{2} (|\widetilde{V}| - 1) + |\mathcal{Z}(\mathcal{P}_H)|\right]  & |\widetilde{V}| \ \mathrm{odd} \\
        n - \left[\frac{1}{2} (|\widetilde{V}| - 2) + |\mathcal{Z}(\mathcal{P}_H)|\right] & |\widetilde{V}| \ \mathrm{even}
    \end{cases} .
\end{align} 
This follows from the fact that the $\mathds{F}_2$-rank of the adjacency matrix of $G(H)$ is twice the number of qubits spanned by the fermionic degrees of freedom in the model, and also the number of vertices of the root graph $R$ up to a constant shift.

\begin{table}
\centering
\setcellgapes{3pt}
\makegapedcells
\begin{tabular}{ c c }
\toprule
$R$ & $L(R)$ \\ 
\midrule
\makecell{\includegraphics[width=0.19\textwidth]{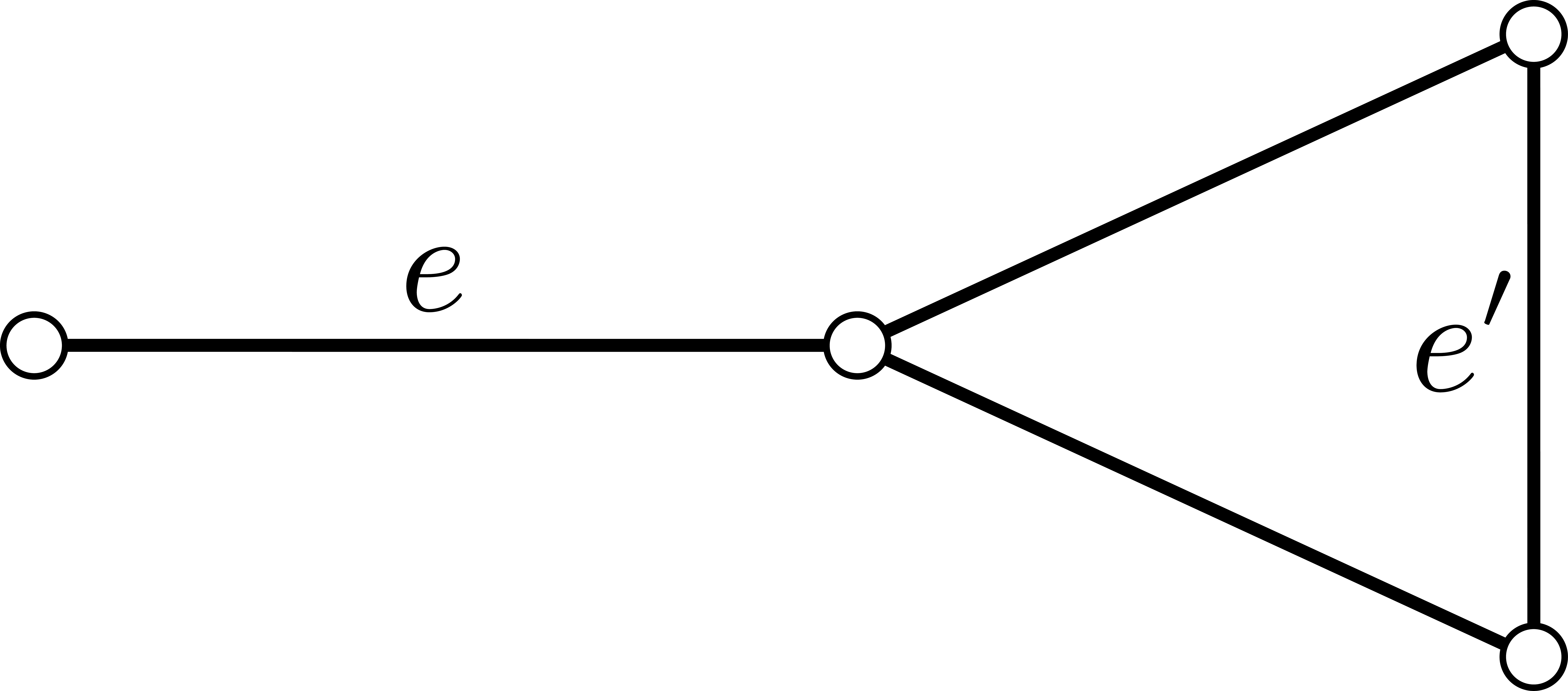}} &  \makecell{\includegraphics[width=0.18\textwidth]{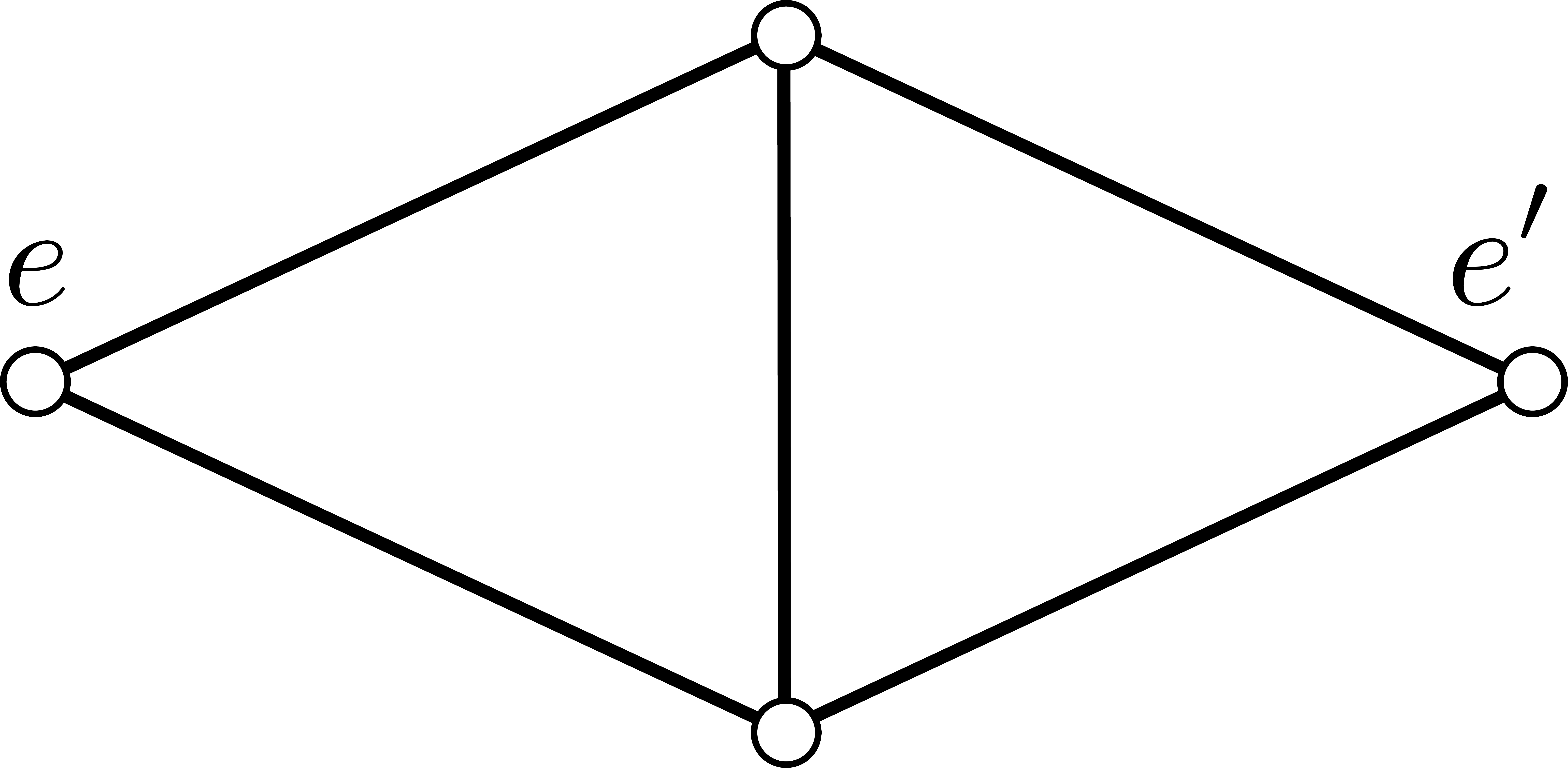}} \\ 
\makecell{\includegraphics[width=0.19\textwidth]{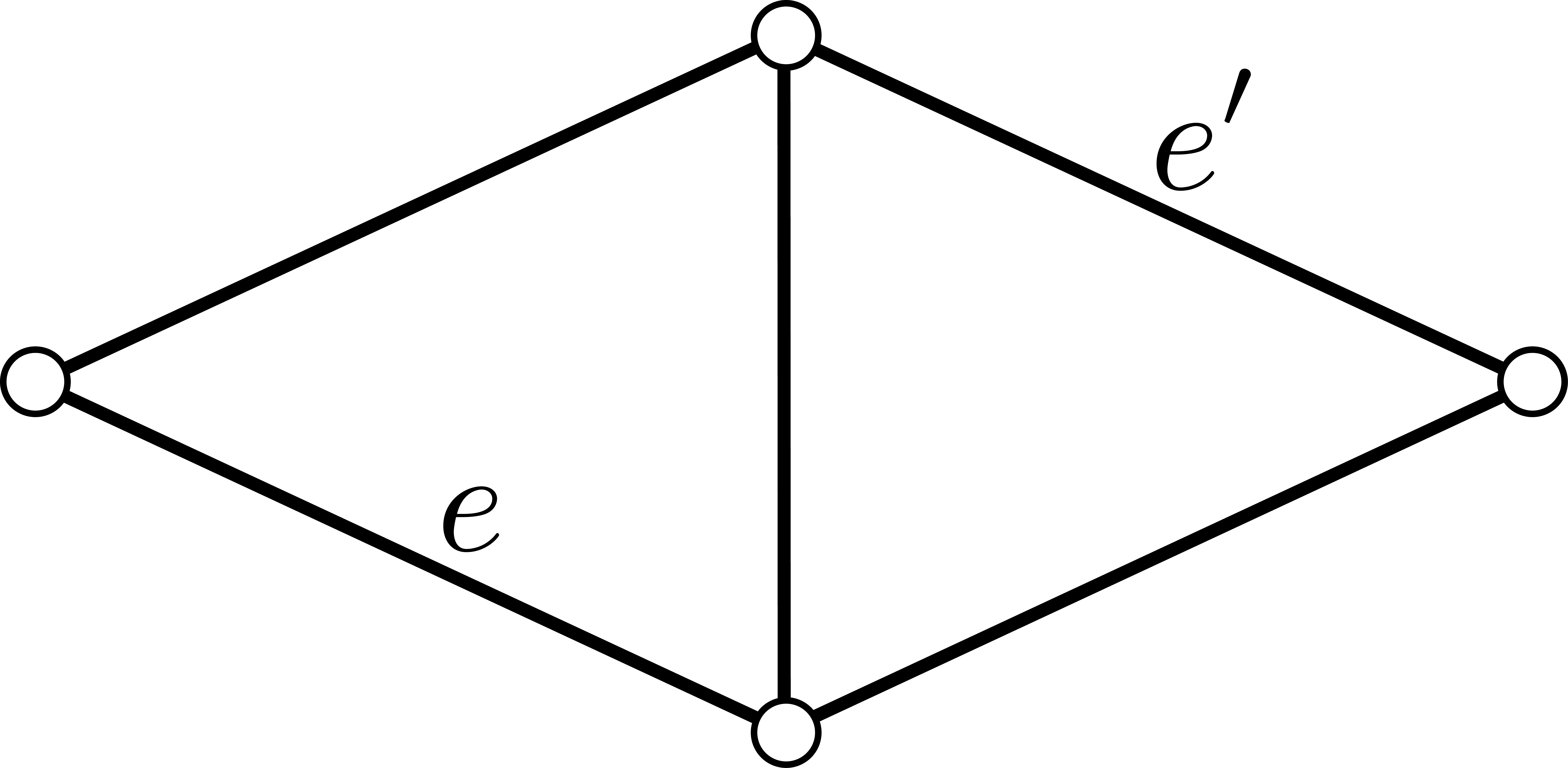}} &  \makecell{\includegraphics[width=0.18\textwidth]{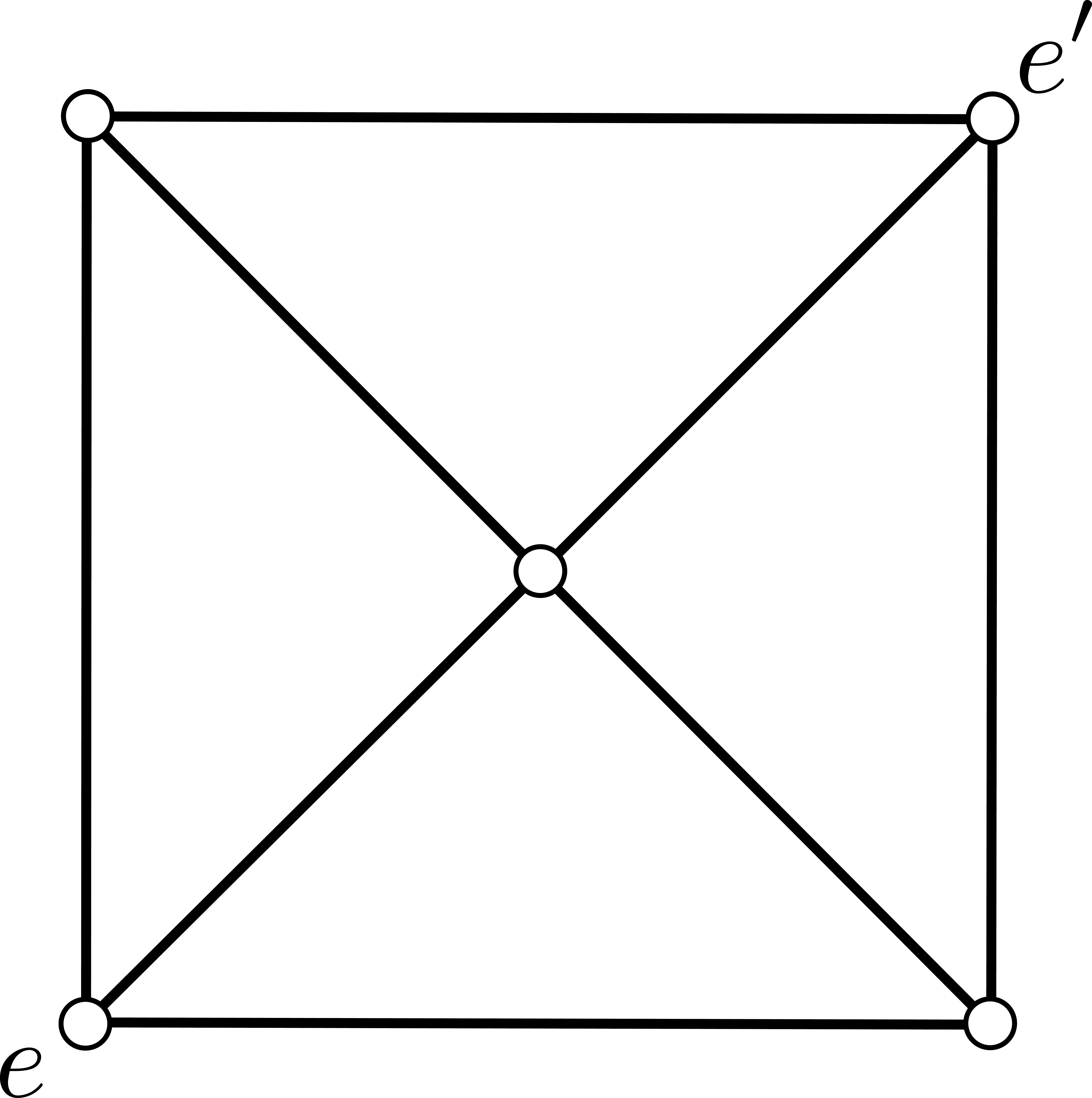}} \\ 
\makecell{\includegraphics[width=0.19\textwidth]{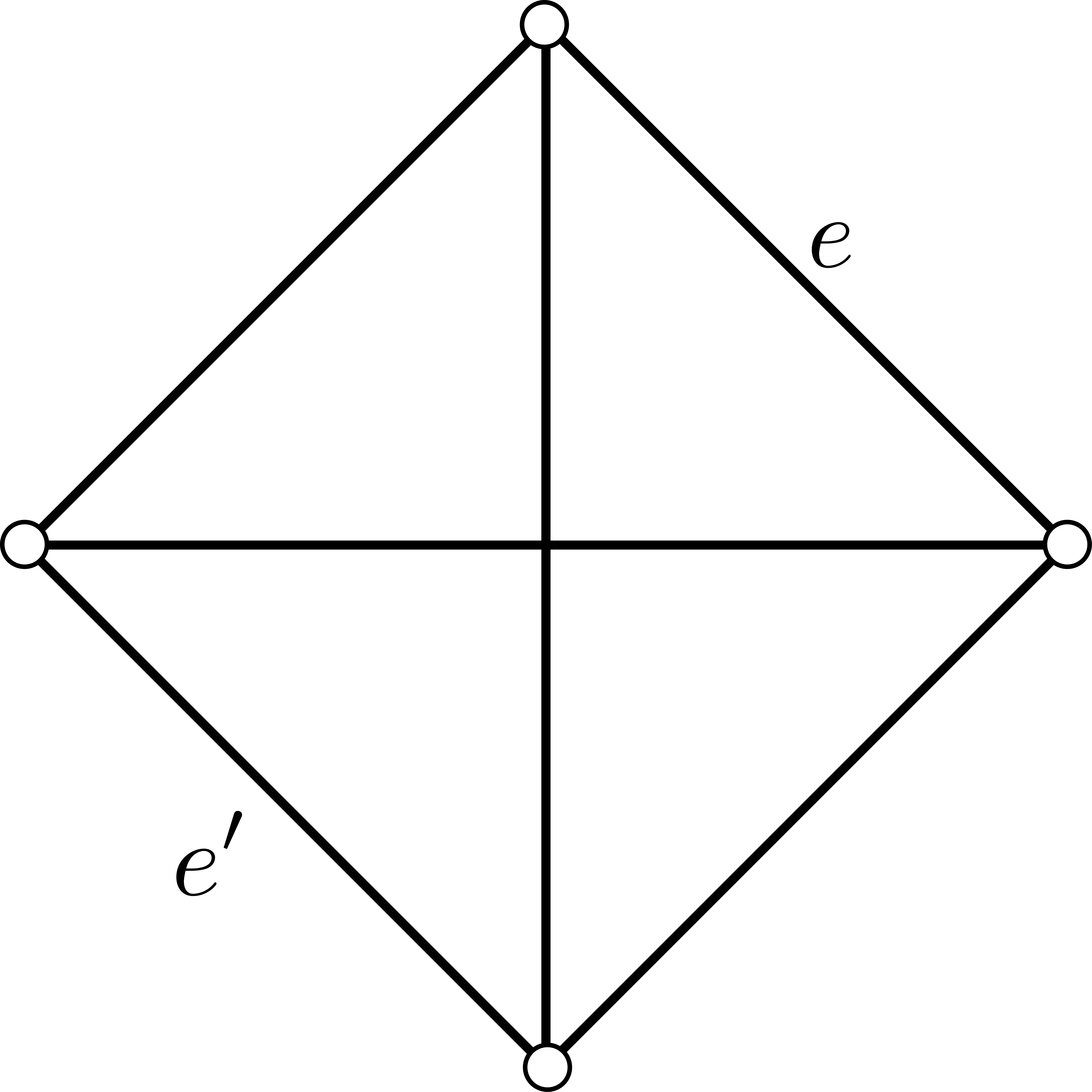}} &  \makecell{\includegraphics[width=0.18\textwidth]{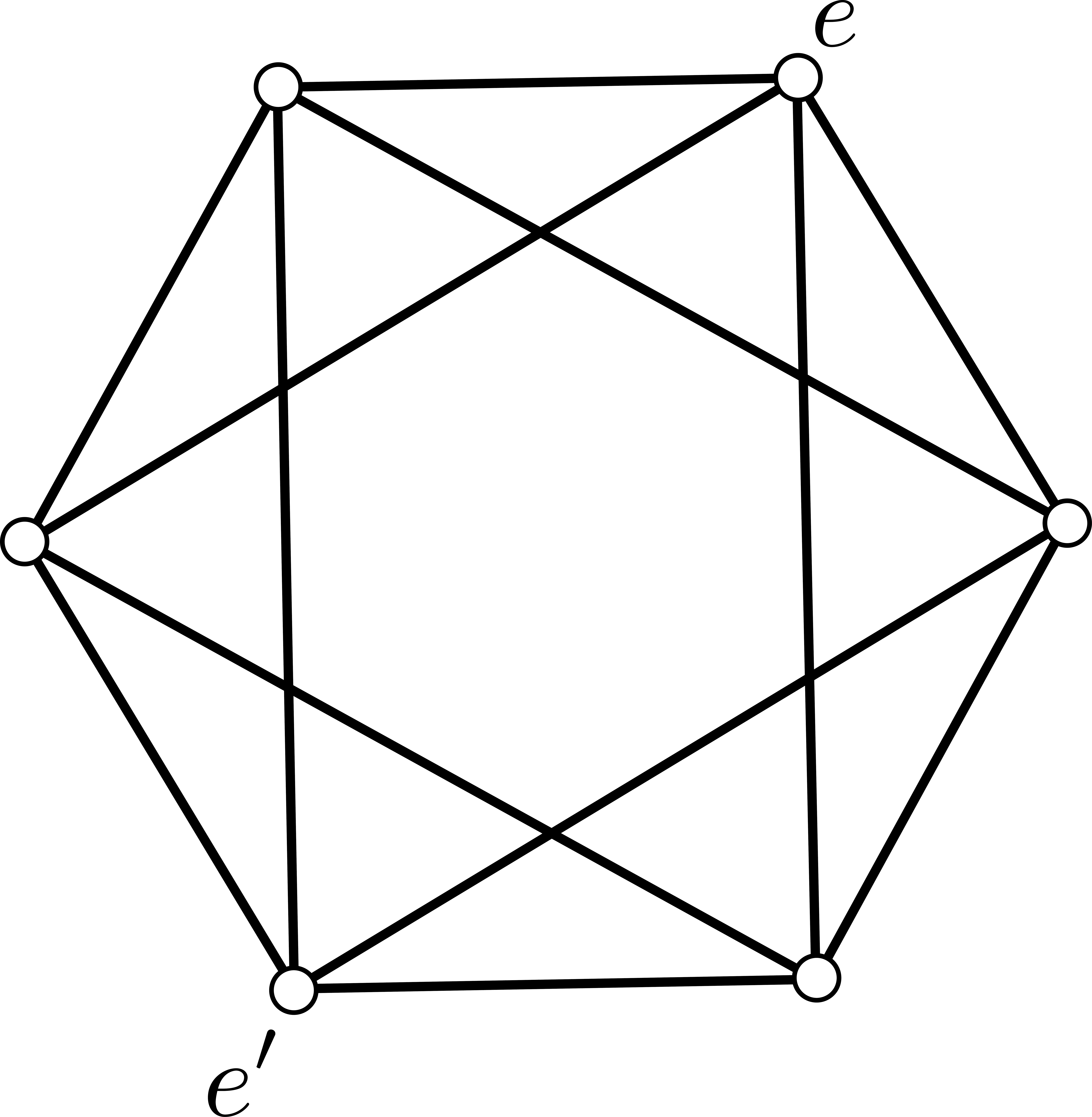}} \\ 
\bottomrule
\end{tabular}
\caption{The Whitney isomorphism theorem \cite{whitney1932congruent} guarantees that the edge automorphisms exchanging $e$ and $e^{\prime}$ in the graphs $R$ in the left column, or corresponding vertices in their line graphs on the right, which cannot be realized by any vertex automorphism of $R$, are the only such cases.}
\label{tab:strongexcept}
\end{table}

Finally, we note that there may be additional symmetries, such as translation invariance, if the coefficients $h_{\boldsymbol{j}}$ themselves satisfy a symmetry. 
Our characterization will allow us to say something about this situation when the associated symmetry transformation is a Clifford operator---that is, a unitary operator in the normalizer of the Pauli group---commuting with the Pauli symmetries in $\mathcal{Z}(\mathcal{P}_H)$, such as, e.g., a spatial translation. 
The following statement follows from a theorem by Whitney \cite{whitney1932congruent} (and extended to infinite graphs in \cite{bednarek1985whitneys}).

\begin{customcor}{1.2}[Clifford Symmetries and Whitney Isomorphism]
Let $\widetilde{H} = i \boldsymbol{\gamma} \cdot \mathbf{h} \cdot \boldsymbol{\gamma}^{\mathrm{T}}$ be a free-fermion Hamiltonian with single-particle Hamiltonian $\mathbf{h}$ in a fixed symmetry sector of $\mathcal{Z}(\mathcal{P}_H)$. 
Then any unitary Clifford symmetry $U$ such that $U^\dag \widetilde{H}U = \widetilde{H}$ induces a signed permutation symmetry $\mathbf{u}$ such that $\mathbf{h}=\mathbf{u}^{\mathrm{T}}\cdot\mathbf{h}\cdot\mathbf{u}$, except for when $U$ induces one of the three edge isomorphisms shown in Table~\ref{tab:strongexcept}.
\label{cor:clifsym}
\end{customcor}

\begin{proof}
This follows from the Whitney isomorphism theorem: except for the three cases shown in Table~\ref{tab:strongexcept}, any adjacency-preserving permutation of the vertices of $G(\widetilde{H})$ is induced by an adjacency-preserving permutation of the vertices of $R$. 
A Clifford symmetry $U$ acts as a signed permutation of the Hamiltonian terms which preserves $\widetilde{H}$. 
Suppose the associated unsigned permutation is not one of the exceptional cases, and so is induced by a permutation $\pi$ on the vertices of $R$. 
This gives 
\begin{align}
    \widetilde{H} &= U^{\dagger} \widetilde{H} U \label{eq:cor1p2proofline1} \\
    &= i \sum_{(j, k) \in \widetilde{E}} h_{jk} \left(U^{\dagger} \gamma_j U \right) \left(U^{\dagger} \gamma_k U\right) \label{eq:cor1p2proofline2}  \\
    &= i \sum_{(j, k) \in \widetilde{E}} (-1)^{x_j + x_k} h_{jk} \gamma_{\pi(j)} \gamma_{\pi(k)} \label{eq:cor1p2proofline3}
\end{align}
where $x_{j} \in \{0, 1\}$ designates the sign associated to the permutation of vertex $j \in \widetilde{V}$. 
By unitarity, this sign must depend on $j$ alone, since $U^{\dagger} \gamma_{j} U$ can only depend on $j$. 
Let $\mathbf{u}$ be a single-particle transition matrix defined as
\begin{align}
    u_{jk} = (-1)^{x_j} \delta_{k \pi(j)}.
\end{align}
Then we can reinterpret Eq.~(\ref{eq:cor1p2proofline3}) in the single-particle picture as
\begin{align}
    i \boldsymbol{\gamma} \cdot \mathbf{h} \cdot \boldsymbol{\gamma}^{\mathrm{T}} &= i \boldsymbol{\gamma} \cdot \left(\mathbf{u}^{\mathrm{T}} \cdot \mathbf{h} \cdot \mathbf{u} \right) \cdot \boldsymbol{\gamma}^{\mathrm{T}}. \label{eq:cor1p2proofline4}
\end{align}
By linear independence, Eq.~(\ref{eq:cor1p2proofline4}) therefore implies
\begin{align}
    \mathbf{h} = \mathbf{u}^{\mathrm{T}} \cdot \mathbf{h} \cdot \mathbf{u}
\end{align}
and the claim follows.
\end{proof}

See section \ref{sec:small} for a simple example of an exceptional Hamiltonian realizing a frustration graph shown in Table \ref{tab:strongexcept}. We now complete our characterization of free-fermion solutions by choosing an orientation for every edge in the root graph over a restricted subspace determined by the constants of motion.

\subsection{Orientation and Full Solution}
\label{sec:orientation}

As discussed previously, the Lie-homomorphism condition Eq.~(\ref{eq:scommeq}) does not fully constrain the free-fermion solution of a given Pauli Hamiltonian. 
This is because we are free to choose a direction to each edge in the root graph by exchanging $\phi_1(\boldsymbol{j}) \leftrightarrow \phi_2(\boldsymbol{j})$, which is equivalent to changing the sign of the term $i \gamma_{\phi_1(\boldsymbol{j})} \gamma_{\phi_2(\boldsymbol{j})}$ in $\widetilde{H}$ corresponding to $\sigma^{\boldsymbol{j}}$ in $H$. 
A related ambiguity corresponds to the cycle symmetry subgroup $Z_H$: we are free to choose a symmetry sector over which to solve the Hamiltonian $H$ by choosing a mutual $\pm 1$-eigenspace of independent nontrivial generators of this group. 
It will turn out that both ambiguities are resolved simultaneously.

First suppose we have a Hamiltonian $H$ satisfying Eq.~(\ref{eq:ffsolution}) for some root graph $R \equiv (\widetilde{V}, \widetilde{E})$. 
Construct a spanning tree $\Upsilon \equiv (\widetilde{V}, \widetilde{E}^{\prime})$ of $R$, defined as:

\begin{definition}[Spanning Tree]
Given a connected graph $G \equiv (V, E)$, a spanning tree $\Upsilon \equiv (V, E^{\prime}) \subseteq G$ is a connected subgraph of $G$ such that $E^{\prime}$ contains no cycles. 
\end{definition}

\noindent This can be performed in linear time in $|\widetilde{V}|$. 
Designate a particular vertex $v \in \widetilde{V}$ as the root of this tree. 
Each vertex $u \in \widetilde{V}$ has a unique path $p(u, v) \subseteq \widetilde{E}$ in $\Upsilon$ to the root, the path $p(v, v)$ being empty. 
Choose an arbitrary direction for each edge in $\widetilde{E}^{\prime}$ (we will see shortly to what extent this choice is important). 

Our choice of spanning tree determines a basis of \emph{fundamental cycles} for the binary cycle space of $R$ and thus a generating set of Paulis for the cycle subgroup $Z_H$. 
To see this, note that for each edge $\phi(\boldsymbol{j}) \in \widetilde{E}/\widetilde{E}^{\prime}$, there is a unique cycle of $R$ given by 
\begin{align}
    Y_{\boldsymbol{j}} \equiv p[\phi_1(\boldsymbol{j}), v] \cup p[\phi_2(\boldsymbol{j}), v] \cup \phi(\boldsymbol{j}).
\end{align}
Let $\sigma^{\boldsymbol{y}(\boldsymbol{j})}$ be the cycle subgroup generator associated to $Y_{\boldsymbol{j}}$, defined by
\begin{align}
    \boldsymbol{y}(\boldsymbol{j}) = \bigoplus_{\{\boldsymbol{z} | \phi(\boldsymbol{z}) \in Y_{\boldsymbol{j}}\}} \boldsymbol{z} 
\label{eq:cyclegendef}
\end{align}
such that
\begin{align}
    \sigma^{\boldsymbol{y}(\boldsymbol{j})} = i^d \prod_{\{\boldsymbol{z} | \phi(\boldsymbol{z}) \in Y_{\boldsymbol{j}}\}} \sigma^{\boldsymbol{z}}
\label{eq:cyclesubgroupgendef}
\end{align}
where $d \in \{0, 1, 2, 3\}$ again designates the appropriate phase. 
The set of such $Y_{\boldsymbol{j}}$ contains $|\widetilde{E}| - |\widetilde{V}| + 1$ cycles and forms an independent generating set for all the cycles of $R$ under symmetric difference. 
The corresponding set of $\sigma^{\boldsymbol{y}(\boldsymbol{j})}$ is therefore an independent generating set of the cycle subgroup up to signs, since individual Pauli operators either commute or anticommute and square to the identity. 

In a similar fashion as with twin-vertex symmetries, we restrict to a mutual $\pm 1$ eigenspace of the cycle-subgroup generators, designated by a binary string $\boldsymbol{x} \in \{0, 1\}^{\times |\widetilde{E}| - |\widetilde{V}| + 1}$ over the $\boldsymbol{j}$ such that $\phi(\boldsymbol{j}) \in \widetilde{E}/\widetilde{E}^{\prime}$. 
That is, we restrict to the mutual $+1$ eigenspace of the stabilizer group
\begin{align}
    Z_{H, \boldsymbol{x}} \equiv \bigl\langle (-1)^{x_{\boldsymbol{j}}} \sigma^{\boldsymbol{y}(\boldsymbol{j})}\bigr\rangle.
\end{align}
If Eq.~(\ref{eq:cyclegendef}) gives $\boldsymbol{y}(\boldsymbol{j}) = \mathbf{0}$ for any $\boldsymbol{j}$, then we take the corresponding $x_{\boldsymbol{j}} = 0$. 
We then simply choose the direction for the edge $\phi(\boldsymbol{j})$ such that
\begin{align}
    (-1)^{x_{\boldsymbol{j}}} i^d \left[\prod_{\{\boldsymbol{z} | \phi(\boldsymbol{z}) \in Y_{\boldsymbol{j}}\}} i \gamma_{\phi_1(\boldsymbol{z})} \gamma_{\phi_2(\boldsymbol{z})}\right] = +I
\label{eq:directioncond}
\end{align}
where $d$ is as defined in Eq.~(\ref{eq:cyclesubgroupgendef}). 
This ensures that the product of Majorana hopping terms around a fundamental cycle $Y_{\boldsymbol{j}}$ agrees with the corresponding Pauli product over the restricted subspace (i.e.\ up to equivalencies by stabilizers in the group $Z_{H, \boldsymbol{x}}$). 
By the Lie-homomorphism constraint Eq.~(\ref{eq:scommeq}), all products of Majorana hopping terms around a cycle of $R$ therefore agree with their corresponding Pauli products over this subspace, and so the multiplication relations of the Paulis are respected by their associated fermion hopping terms up to stabilizer equivalencies. 
Since we have exactly as many elements $Y_{\boldsymbol{j}}$ in our fundamental cycle basis as undirected edges $\phi(\boldsymbol{j})$, such an orientation can always be chosen.

\begin{table*}[t]
\centering
\setcellgapes{20pt}
\makegapedcells
\begin{tabular} {l c c c c c}
\toprule
\makecell{$\sigma^{i}_1$\textbackslash$\sigma^{j}_2$} & \makecell{$I$} & \makecell{$X$} & \makecell{$Y$} & \makecell{$Z$} & 2-qubit frustration graph $L(K_6)$ \\ \midrule
\makecell{$I$} & \makecell{$P \equiv i \gamma_1 \gamma_2 \gamma_3 \gamma_4 \gamma_5 \gamma_6$} & \makecell{$i \gamma_3 \gamma_5$} & \makecell{$i \gamma_2 \gamma_5$} & \makecell{$i \gamma_2 \gamma_3$} & \makecell{\multirow{4}{*}{\includegraphics[width=0.4\textwidth]{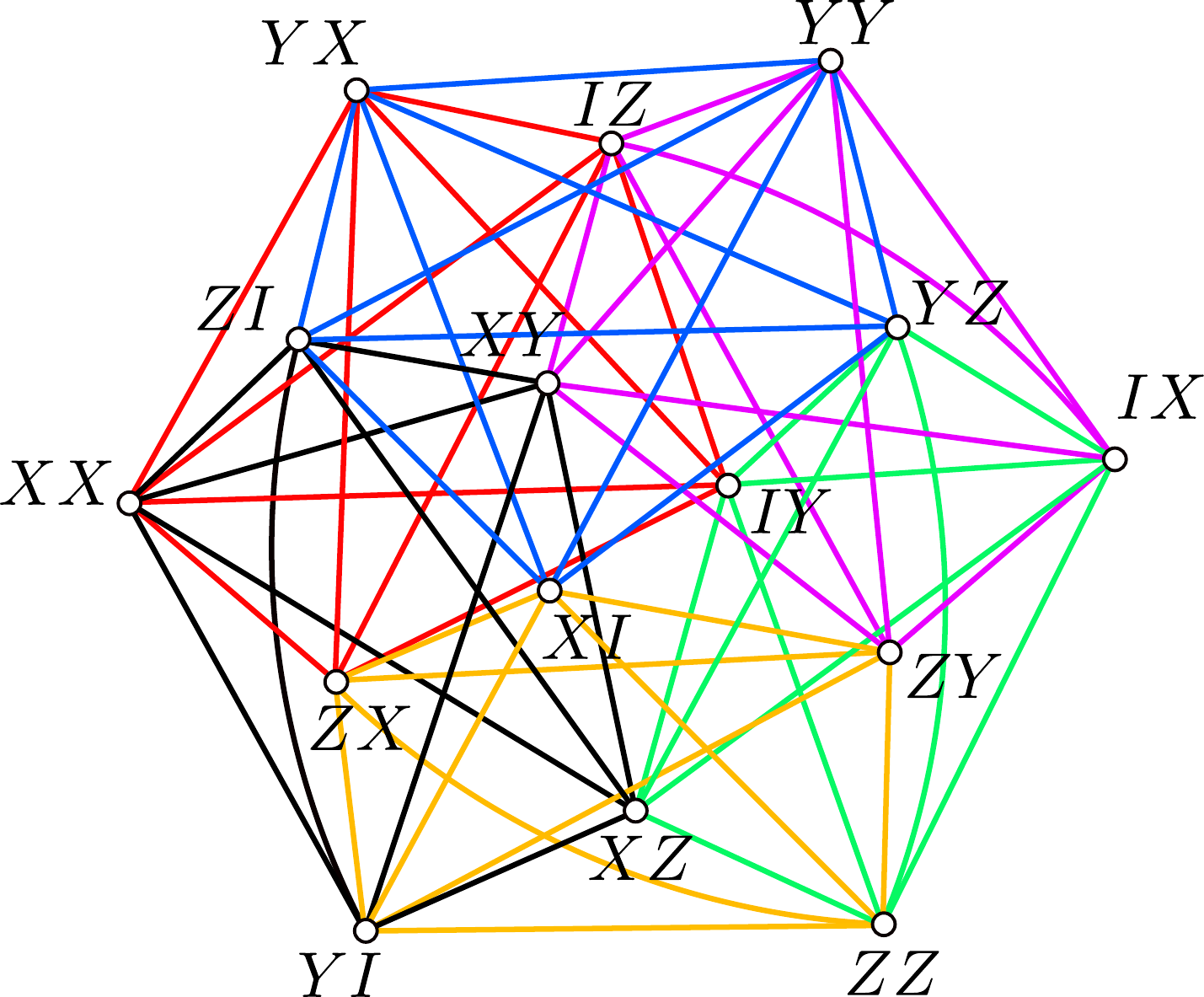}}} \\ \cline{1-5}
\makecell{$X$} & \makecell{$i \gamma_4 \gamma_6$} & \makecell{$i \gamma_1 \gamma_2$} & \makecell{$-i \gamma_1 \gamma_3$} & \makecell{$i \gamma_1 \gamma_5$} & \\ \cline{1-5}
\makecell{$Y$} & \makecell{$-i \gamma_1 \gamma_6$} & \makecell{$-i \gamma_2 \gamma_4$} & \makecell{$i \gamma_3 \gamma_4$} & \makecell{$i \gamma_4 \gamma_5$} & \\
\cline{1-5}
\makecell{$Z$} & \makecell{$-i \gamma_1 \gamma_4$} & \makecell{$i \gamma_2 \gamma_6$} & \makecell{$-i \gamma_3 \gamma_6$} & \makecell{$i \gamma_5 \gamma_6$} & \\ \bottomrule
\end{tabular}
\caption{(Left) Fermionization of the two-qubit Pauli algebra $\mathcal{P}_2 = \{\sigma^i_1 \otimes \sigma^j_2 \}_{(i, j) \neq (0, 0)}$ by six fermion modes. 
The graph isomorphism $G(\mathcal{P}_2) \simeq L(K_6)$ reflects the Lie-algebra isomorphism between $\mathfrak{su}(4)$ and $\mathfrak{spin}(6)$. 
Though the scalar-commutation relations are reproduced by quadratics in $\{\gamma_{\mu}\}_{\mu = 1}^6$, the one-sided multiplication relations are only recovered upon projecting onto the $+1$ eigenspace of $P \equiv i \gamma_1 \gamma_2 \gamma_3 \gamma_4 \gamma_5 \gamma_6$ on the fermion side of the mapping. 
(Right) The graph $L(K_6)$, with vertices labeled by a particular satisfying Pauli assignment. 
Edges are colored to identify the six $K_5$ subgraphs in the Krausz decomposition of this graph, corresponding to the six fermion modes. 
Each vertex belongs to exactly two such subgraphs, as must be the case for a line graph. 
This graphical correspondence was first observed in Ref.~\cite{goodmanson1996graphical} in the language of the Dirac algebra.}
\label{tab:2qubit}
\end{table*}

Why are we free then, to choose an arbitrary sign for each free-fermion term in our original spanning tree? 
This choice is actually equivalent to a choice of signs on the definitions of the individual Majorana modes themselves and so amounts to a choice of orientation for the coordinate basis in which we write $\mathbf{h}$. 
To see this, choose a fiducial orientation for $R$ satisfying Eq.~(\ref{eq:directioncond}), and suppose our particular free-fermion solution---not necessarily oriented this way---corresponds to the mapping
\begin{align}
    \sigma^{\boldsymbol{j}} \mapsto i (-1)^{x_{\boldsymbol{j}}} \gamma_{\phi_1(\boldsymbol{j})} \gamma_{\phi_2(\boldsymbol{j})}
\end{align}
for $\phi(\boldsymbol{j}) \in \widetilde{E}^{\prime}$ and with $x_{\boldsymbol{j}} \in \{0, 1\}$ designating the edge-direction of $\phi(\boldsymbol{j})$ relative to the fiducial orientation. 
We have
\begin{align}
    (-1)^{x_{\boldsymbol{j}}} &= (-1)^{r_{\phi_1(\boldsymbol{j})} + r_{\phi_2(\boldsymbol{j})}}
\label{eq:phasekickback}
\end{align}
where
\begin{align}
    r_{u} = \sum_{\{\boldsymbol{k}| \phi(\boldsymbol{k}) \in p(u, v)\}} x_{\boldsymbol{k}}.
\end{align}
Since the symmetric difference of $p[\phi_1(\boldsymbol{j}), v]$ and $p[\phi_2(\boldsymbol{j}), v]$ is the edge $\phi(\boldsymbol{j}) \in \widetilde{E}^{\prime}$, all sign factors on the right side of Eq.~(\ref{eq:phasekickback}) cancel except for $(-1)^{x_{\boldsymbol{j}}}$. 

We can then absorb $(-1)^{r_{\phi_1(\boldsymbol{j})}}$ and $(-1)^{r_{\phi_2(\boldsymbol{j})}}$ onto the definitions of $\gamma_{\phi_1(\boldsymbol{j})}$ and $\gamma_{\phi_2(\boldsymbol{j})}$, respectively. 
We furthermore see that imposing Eq.~(\ref{eq:directioncond}) gives an edge-direction for $\phi(\boldsymbol{j}) \in \widetilde{E}/\widetilde{E}^{\prime}$ that differs from that of the fiducial orientation by the associated sign factor $(-1)^{r_{\phi_1(\boldsymbol{j})} + r_{\phi_2(\boldsymbol{j})}}$, which remains consistent with a redefinition of the signs on the individual Majorana modes. 
Letting $\mathbf{h}$ be the single-particle Hamiltonian for the fiducial orientation, such a redefinition corresponds to conjugating $\mathbf{h}$ by a $\pm 1$ diagonal matrix. 
As no scalar quantity of $\mathbf{h}$ can depend on this choice, this redefinition corresponds to a gauge freedom.

Proceeding this way, we can solve the effective Hamiltonian
\begin{align}
    H_{\boldsymbol{x}} \equiv H \prod_{\{\boldsymbol{j}| \phi(\boldsymbol{j}) \in \widetilde{E}/\widetilde{E}^{\prime}\}} \left(\frac{I + (-1)^{x_{\boldsymbol{j}}} \sigma^{\boldsymbol{y}(\boldsymbol{j})}}{2}\right)
\end{align}

\noindent sector-by-sector over each stabilizer eigenspace designated by $\boldsymbol{x}$. 
If we also need to remove twin vertices from $G(H)$ before it is a line graph, we project onto the mutual $+1$ eigenspace of the stabilizer group $\mathcal{S}_{\boldsymbol{x}}$ defined previously in Section \ref{sec:symmetries} as well. 
Finally, if the parity operator $P$ is trivial in the Pauli description, then only a fixed-parity eigenspace in the fermion description will be physical. 

In the next section, we will see how known free-fermion solutions fit into this characterization and demonstrate how our method can be used to find new free-fermion solvable models, for which we give an example.

\section{Examples}
\label{sec:examples}
\subsection{Small Systems}
\label{sec:small}

The frustration graph of single-qubit Paulis ${X, Y, Z}$ is $K_3$, the complete graph on three vertices. 
This graph is the line graph of not one, but two non-isomorphic graphs: the so-called `claw' graph $K_{1, 3}$, and $K_3$ itself (see Table \ref{tab:notsum}). 
By the Whitney isomorphism theorem \cite{whitney1932congruent}, $K_3$ is the only graph which is not the line graph of a unique graph. 
This ambiguity results in the existence of two distinct free-fermion solutions of a single qubit Hamiltonian, which we will hereafter refer to as ``even" (labeled ``0") and ``odd" (labeled ``1") fermionizations
\begin{align}
    \begin{cases}
    X_0 = i \gamma_0 \gamma_1 & X_1 = i \gamma_2 \gamma_3 \\
    Y_0 = i \gamma_1 \gamma_2 & Y_1 = i \gamma_0 \gamma_3 \\
    Z_0 = i \gamma_0 \gamma_2 & Z_1 = -i \gamma_1 \gamma_3
    \end{cases} \mathrm{.}
    \label{eq:singlequbitferms}
\end{align}
In the even fermionization, no T-join of the root graph $K_3$ exists since there are only three fermion modes $\{\gamma_0, \gamma_1, \gamma_2\}$. 
The orientation of the root graph is constrained by the identity $XYZ = iI$. 
In the odd fermionization, there are four fermion modes $\{\gamma_0, \gamma_1, \gamma_2, \gamma_3\}$, and so a T-join does exist for the root graph $K_{1, 3}$. 
It is the set of all edges of this graph. 
The parity operator is trivial in the Pauli description however, and so the constraint $XYZ = iI$ is enforced by restricting to the $+1$ eigenspace of $P \equiv -\gamma_0 \gamma_1 \gamma_2 \gamma_3$ in the fermion description. 
We are free to choose the orientation of the root graph $K_{1, 3}$ however we like in this case, since it contains no cycles, though this choice will affect what we call the physical eigenspace of $P$. 
By virtue of the line-graph construction and our choice of orientation, both fermionizations respect the single-qubit Pauli multiplication relations, up to stabilizer equivalencies in some cases.

We have made the choice to label the Paulis in the two fermionizations in a compatible way, such that 
\begin{align}
    \sigma^{j}_0 = P \sigma^{j}_1
\end{align}
This gives
\begin{align}
    [\sigma^{j}_{m}, \sigma^{k}_{m}] = 2i \varepsilon_{jk \ell} \sigma^{\ell}_{0} \mathrm{.}
    \label{eq:mplusm}
\end{align}
where $m \in \{0, 1\}$ and $\varepsilon$ is the Levi-Civita tensor. 
Since an even number of parity-operator factors appear on the left side of Eq.~(\ref{eq:mplusm}), the commutator between two Paulis in either fermionization is always a Pauli in the even fermionization. 
We can additionally write multiplication relations between the two fermionizations concisely, as
\begin{align}
    \sigma^{j}_{p} \sigma^{k}_{q} = \delta_{jk} P^{p \oplus q} + i (1 - \delta_{jk}) \varepsilon_{jk \ell} \sigma_{p \oplus q}^{\ell} \mathrm{.}
\end{align}

Another exceptional situation arises for 2 qubits, for which the full frustration graph is the line graph of $K_6$, again depicted graphically in Table~\ref{tab:notsum}. 
This reveals a free-fermion solution for all 2-qubit Hamiltonians by six fermion modes, listed explicitly in Table~\ref{tab:2qubit}. 
We again choose our orientation by picking a spanning tree of the root graph (for example all terms containing the mode $\gamma_5$), choosing an arbitrary orientation on this tree, and choosing the remaining orientations by enforcing the condition in Eq.~(\ref{eq:directioncond}). 
Since $K_6$ has an even number of vertices, there exists a T-join for this graph, e.g.\ the terms $\{XX, YY, ZZ\}$. 
The associated parity operator is trivial however, as $(XX)(YY)(ZZ) = -I$, and so only the $+1$ eigenspace of $P$ in the fermion description will be physical. 
This solution reflects the exceptional Lie algebra isomorphism, $\mathfrak{su}(4) \simeq \mathfrak{spin}(6)$.

Finally, we give an example of a three-qubit Hamiltonian with an exceptional \emph{symmetry}, namely

\begin{align}
    H = XII + YII + ZXX + ZZZ
\end{align}

\noindent This Hamiltonian has the frustration graph shown in the top right entry of Table \ref{tab:strongexcept}, and thus is an exceptional case to Corollary \ref{cor:clifsym}. A symmetry transformation exchanging $e$ and $e^{\prime}$ for this Hamiltonian is the Hadamard gate applied to the second and third qubits, which exchanges the third and fourth terms, but cannot be realized as any permutation of the individual Majorana modes in its free-fermion description.

\begin{figure*}[t!]
    \centering
    \includegraphics[width=0.9\textwidth]{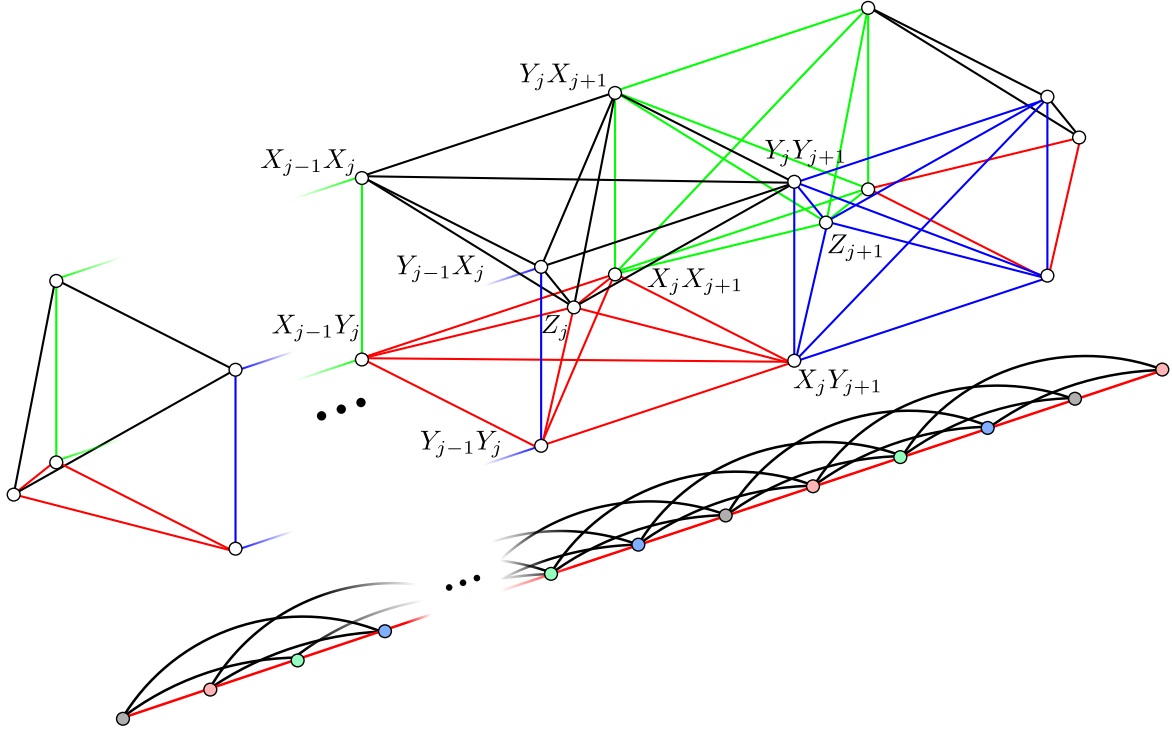}
    \caption{Frustration graph for the general XY model and its root graph, shown below. 
    Cliques are colored to show the Krausz decomposition, which is the image of the model under the Jordan-Wigner transform. 
    Vertices in the root graph are correspondingly colored, and a spanning tree is highlighted.}
    \label{fig:xymodel}
\end{figure*}

\subsection{1-dimensional chains}

Shown in Figure~\ref{fig:xymodel} is the frustration graph $G(H)$ for the most general nearest-neighbor Pauli Hamiltonian in 1-d (on open boundary conditions) which is mapped to a free-fermion Hamiltonian under the Jordan-Wigner transformation,
\begin{align}
H = \sum_{j = 1}^{n - 1} \sum_{\alpha, \beta \in \{x, y\}} \mu^{j}_{\alpha \beta} \sigma_j^{\alpha} \otimes \sigma_{j + 1}^{\beta} + \sum_{j = 1}^n \nu_j Z_j.
\label{eq:1dchain}
\end{align}
Cliques are colored according to the Krausz decomposition of this graph, which is easily seen by the free-fermion description. 
The fermion hopping graph, $R$, is shown below. 
Note that the cycle symmetry subgroup $Z_H$ for this model is trivial, as every product of Hamiltonian terms along a cycle in $R$ is the identity. 
Since the number of vertices in $R$ is even, a T-join does exist, and the parity operator is in-fact $P = Z^{\otimes n}$. 
Therefore, we have $|\mathcal{Z}(\mathcal{P}_H)| = 1$. 

An example spanning tree for the root graph is highlighted, taken simply to be the path along edges $(j, j + 1)$ from $\gamma_1$ to $\gamma_{2n}$. 
Including any additional edge in this tree will form a cycle. 
A natural orientation for this tree is to direct every edge from vertex $j + 1$ to vertex $j$. 
Note that we can recover the Jordan-Wigner transformation from this graphical description alone. 
We first adjoin a single fictitious qubit and a single coupling term to the Hamiltonian, as
\begin{align}
    H^{\prime} = \mu_{xx}^{0} X_0 X_1 + H.
\end{align}
Since the remaining qubits only couple to qubit ``0" along the $X$-direction, all operators in $\mathcal{P}_H$ commute on this qubit. 
Furthermore, this new term adds one vertex to the black clique at the left boundary of the chain in Fig.~\ref{fig:xymodel}. 
It thus extends the spanning tree of $R$ by one vertex -- which we label $\gamma_0$ -- due to the fact that this new term only belongs to one clique (so we take its additional Majorana mode to be a clique of size zero). 
It can be easily verified that products of Hamiltonian terms from this new vertex to any vertex along the chosen spanning tree have the form
\begin{align}
    \begin{cases}
        i \gamma_{0} \gamma_{2 j - 1} \equiv X_0 \bigotimes_{k = 1}^{j - 1} Z_k \otimes X_j & \\
        i \gamma_{0} \gamma_{2j} \equiv  X_0 \bigotimes_{k = 1}^{j - 1} Z_k \otimes Y_j & \\
    \end{cases}
\end{align}
All such operators share $\gamma_0$, so their commutation relations are unchanged by truncating $\gamma_0$. 
Furthermore, since all operators in $\mathcal{P}_H$ commute on qubit-0, we may truncate this qubit as well without changing the commutation relations of the operators above to obtain the Jordan-Wigner transformation
\begin{align}
    \begin{cases}
        \gamma_{2 j - 1} \equiv \bigotimes_{k = 1}^{j - 1} Z_k \otimes X_j & j \ \mathrm{odd} \\
        \gamma_{2j} \equiv \bigotimes_{k = 1}^{j - 1} Z_k \otimes Y_j & j \ \mathrm{odd} \\
    \end{cases}
\end{align}
In principle, a similar trick would work in general, but we find it generally simpler to define Majorana quadratic operators to avoid truncating at a boundary. Our method is especially convenient when considering the case of periodic boundary conditions on this model, wherein we add the boundary term

\begin{align}
H_{\mathrm{boundary}} = \sum_{\alpha, \beta \in \{x, y\}} \mu^{n}_{\alpha \beta} \sigma_1^{\alpha} \otimes \sigma_{n}^{\beta} 
\end{align}
to the Hamiltonian in Eq.~(\ref{eq:1dchain}). With this term included, the model has a nontrivial cycle symmetry given by taking the product of fermion bilinears around the periodic boundary, and this product is also proportional to the parity operator $P = Z^{\otimes n}$ in the spin picture. Adding a boundary term therefore does not change $|\mathcal{Z}(\mathcal{P}_H)|$, though it does require that we solve the model over each of the eigenspaces of the cycle symmetry independently by choosing the sign of the additional terms in the fermion picture as described in Section~\ref{sec:orientation}. Let the eigenspace of $Z^{\otimes n}$ be specified by the eigenvalue $(-1)^p$. For each associated free-fermion model solution, we must then restrict to the $+1$ eigenspace of the parity operator in the spin picture

\begin{align}
    \prod_{j = 1}^n X_j X_{j + 1} \mapsto (-i)^n (-1)^p \prod_{k = 1}^{2n} \gamma_k \mathrm{.}
\end{align}
where index addition is taken modulo $n$. This ensures that our free-fermionic solution respects the constraint

\begin{figure*}[t!]
    \centering
    \includegraphics[width=\textwidth]{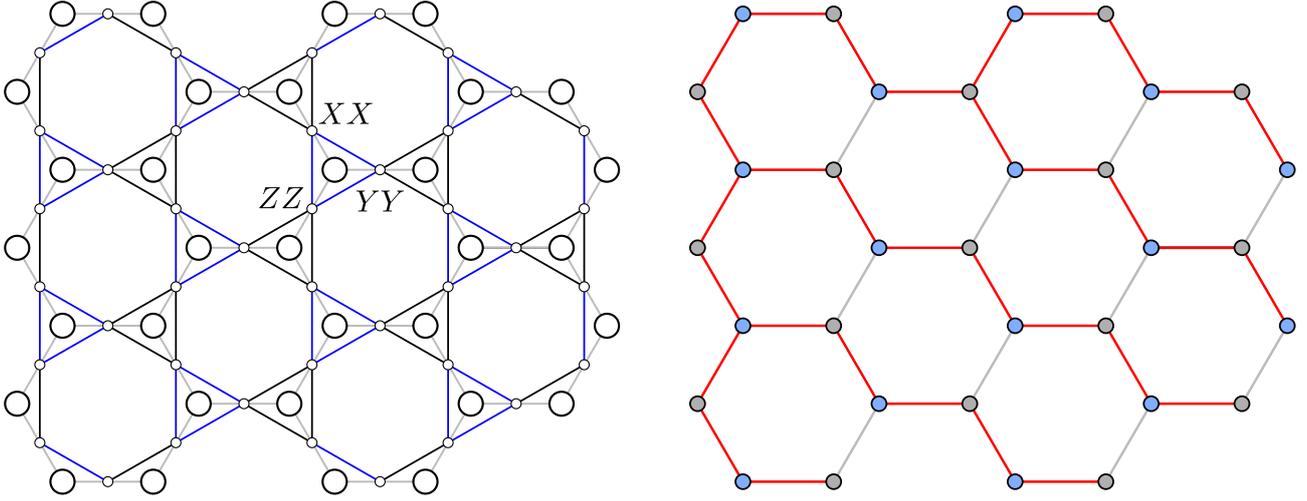}
    \caption{Frustration graph for the Kitaev honeycomb model (left) and its root graph (right). 
    Cliques are colored to show the Krausz decomposition. 
    Interestingly, this model's root graph is the same as its interaction graph. 
    A spanning tree of the root is again highlighted.}
    \label{fig:hcmodel}
\end{figure*}

\begin{align}
    \prod_{j = 1}^n X_j X_{j + 1} = I\mathrm{.}
\end{align}
Notice that solving the two free fermion models together (one for each eigenspace of $Z^{\otimes n}$) gives $2^{n + 1}$ eigenstates, yet restricting to a fixed-parity sector in each keeps only $2^n$ of them, as required. Finally, we see that this model contains no logical qubits via
\begin{align}
    n_L = n - \left[\frac{1}{2}(2n - 2) + 1\right] = 0
\end{align}
as we might expect. 

\subsection{The Kitaev honeycomb model}

Next we consider the Kitaev honeycomb model in two dimensions \cite{kitaev2006anyons}. 
This model has the Hamiltonian
\begin{align}
    H = \sum_{\alpha \in \{x, y, z\}} \sum_{\alpha-\mathrm{links} \ j} J^{j}_{\alpha} \sigma^{\alpha}_j \sigma^{\alpha}_{j + \hat{\alpha}} 
\end{align}
where each of the $\alpha$ links correspond to one of the compass directions of the edges of a honeycomb lattice. 
Once again, the frustration graph with shaded cliques according to the Krausz decomposition is shown in Fig.~\ref{fig:hcmodel}. 
Interestingly, the root graph of this model's frustration graph is again the honeycomb lattice. 
By going backwards, we can see that indeed, any free-fermion model with trivalent hopping graph can be embedded in a 2-body qubit Hamiltonian with the same interaction graph. 
This is because we can find a set of Pauli operators satisfying any frustration graph whose edges can be partitioned into triangles by assigning a different single-qubit Pauli to each of the vertices of every triangle. 
A term in the Hamiltonian is then the tensor product of all of the Pauli operators from the triangles to which its vertex in $G(H)$ belongs. 

Unlike in the one-dimensional example, the cycle subgroup of this model, $Z_H$, is nontrivial. 
This subgroup is generated by the products of Hamiltonian terms around a hexagonal plaquette of the honeycomb lattice, denoted $W_p$ for plaquette $p$. 
These cycles are not independent, however, with constraints between them depending on the boundary conditions of the lattice. 
In particular, if the model is on a torus of dimension $L_x$ by $L_y$, then the product of all Hamiltonian terms is trivial
\begin{align}
    \prod_{\boldsymbol{j} \in V} \sigma^{\boldsymbol{j}} = (-1)^{L_x L_y} I
\end{align}
In this case, the cycles of the honeycomb lattice are not independent, since they similarly multiply to the identity. 
There are thus $L_x L_y - 1$ independent plaquettes on the lattice. 
There are additionally two homotopically nontrivial cycles, which are independent as well. 
Notice that the edges of the honeycomb lattice itself form a T-join, and so the above constraint is also the statement that $P$ is furthermore trivial. 
Therefore, we have
\begin{align}
    |\mathcal{Z}(\mathcal{P}_H)| = L_x L_y - 1 + 2 = L_x L_y + 1
\end{align}
and once again (as first computed in Ref.~\cite{Suchara2011})
\begin{align}
    n_L =2 L_x L_y - \left[\frac{1}{2} \left(2 L_x L_y - 2\right) + L_x L_y + 1 \right] = 0.
\end{align}
This example also illustrates that quite a large number of symmetries could be present, and in general this will complicate finding, e.g., the symmetry sector that contains the ground state.

\begin{figure*}[t!]
    \centering
    \includegraphics[width=\textwidth]{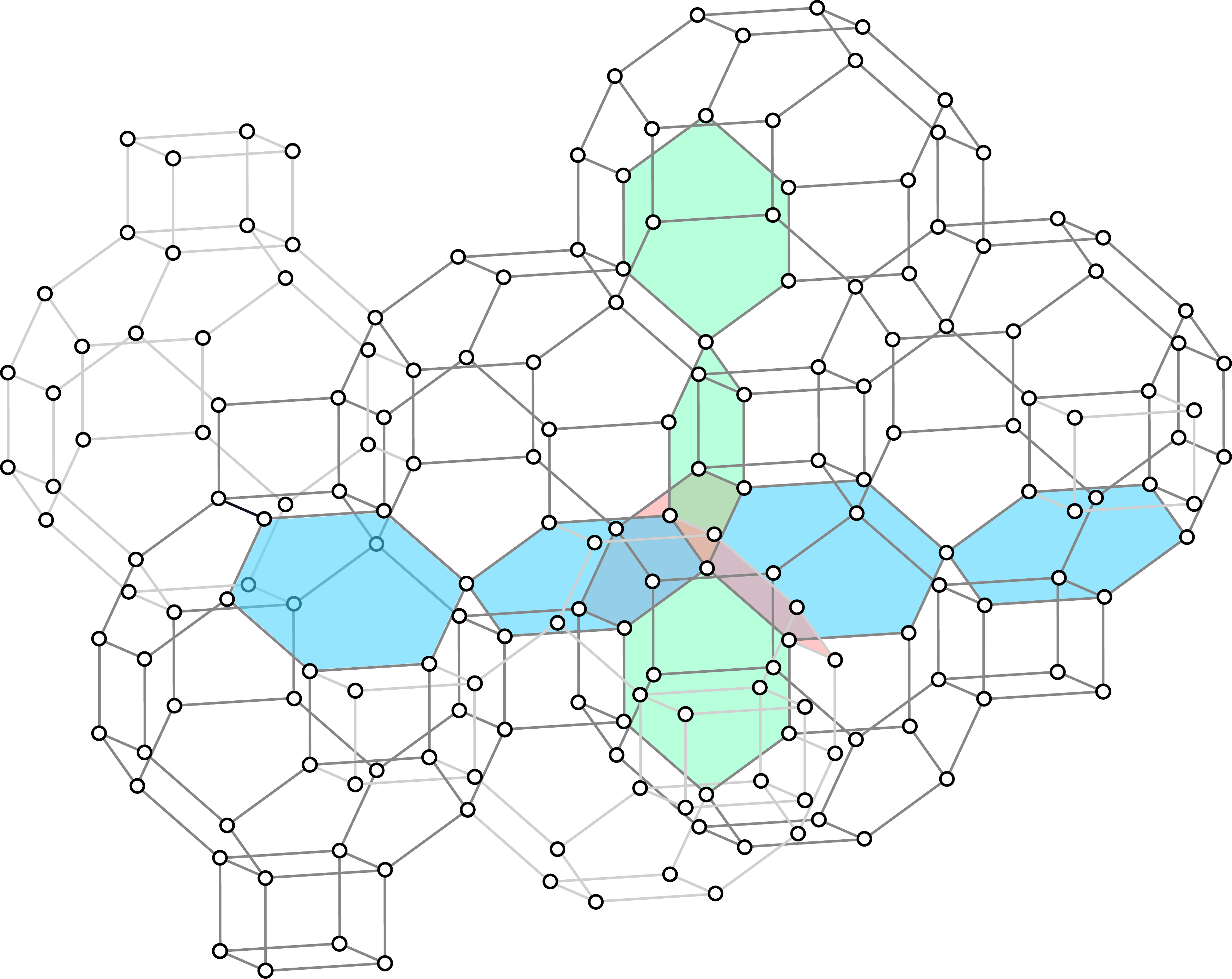}
    \caption{The frustrated hexagonal gauge 3d color code, proposed in Ref.~\cite{roberts2019symmetry}. This model is based on the 3d gauge color code, whose qubits live on the vertices of the lattice shown. Gauge generators for the 3d gauge color code consist of Pauli-$Z$ and Pauli-$X$ operators around both the square and hexagonal faces of the lattice. Stabilizers of the 3d gauge color code consist of Pauli-$Z$ and Pauli-$X$ operators on both the cube and ``ball" cells. The frustrated hexagonal gauge 3D color code is given by taking the stabilizers of the gauge color code together with the hexagonal gauge generators, which commute with the stabilizers, but not with each other. We see from the colored hexagonal faces above that the frustration graph of these gauge generators is a set of disconnected path graphs. Every hexagonal plaquette term anticommutes with exactly two others---the plaquette terms of the other Pauli type intersecting it at exactly one qubit---and commutes with all other terms in the Hamiltonian. }
    \label{fig:gcc}
\end{figure*}

\subsection{Frustrated
Hexagonal Gauge 3D Color Code}

\begin{figure*}[t!]
    \centering
    \includegraphics[width=\textwidth]{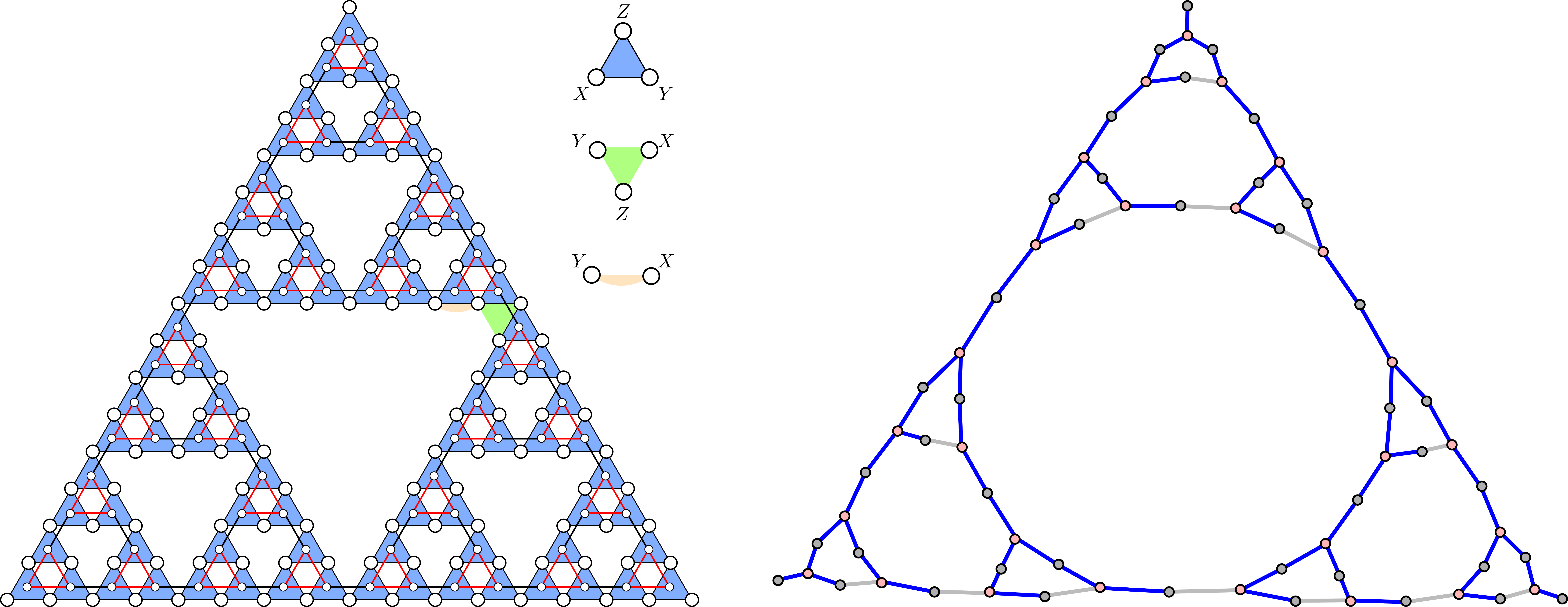}
    \caption{The Sierpinski-Hanoi model (left) with its frustration graph, highlighted, and its root graph (right) with a spanning tree highlighted, for $k = 5$ and local fields absent. 
    Hamiltonian terms are 3-qubit operators acting on qubits at the vertices of the Sierpinski sieve graph, highlighted in blue. 
    Cliques of the frustration graph are colored to show the graph's Krausz decomposition. 
    Green and orange cells depict generators for the model's logical Pauli group. 
    At the interior triangular cells of the lattice are the 3-body generators shown in green. 
    At the interior and exterior edges of the model are 2-body generators shown in orange. 
    These are obtained from their adjoining Hamiltonian terms by reflecting the action on the intersection of their supports (so these generators act differently depending on which edge they act on). 
    The frustration graph of this model is the Hanoi graph $H_3^{k - 1}$. 
    The vertices of this graph are in correspondence to the states of the towers of Hanoi problem with three towers and $k - 1$ discs. 
    The root graph of $H_3^{k - 1}$ contains $H_3^{k - 2}$ as a topological minor.}
    \label{fig:shmodel}
\end{figure*}

The frustrated hexagonal gauge 3d color code is a noncommuting Hamiltonian whose terms consist of the stabilizer generators and a subset of the gauge generators from the gauge color code. The gauge color code \cite{bombin2016dimensional, bombin2015singleshot, kubica2015universal, bombin2015gauge} has a Hamiltonian that is defined in terms of a natural set of gauge generators as
\begin{align}
    H = -\sum_{S \in \square, \varhexagon} \left(J_x \bigotimes_{j \in S} X_j + J_Z \bigotimes_{j \in S} Z_j \right) \nonumber \\
    + \ \text{(boundary terms)}
    \label{eq:gcchamiltonian}
\end{align}
where ``$\square$" and ``$\varhexagon$" denote the sets of square and hexagonal faces on the lattice in Fig.~\ref{fig:gcc}, respectively (see Ref.~\cite{brown2016faulttolerant} for a detailed description of this lattice). 
Here the qubits live on the vertices of the lattice. 
Nontrivial boundary conditions are required to restrict the logical space of this code to a single qubit. 
We will ignore these boundary conditions and consider only gauge generators in the bulk of the lattice. 
The stabilizers for this model are given by products of $X$ or $Z$ around every elementary cell, either a cube or a ``ball". 

We consider a model where we partially restore some of these symmetries. 
Namely, we will consider the cube and balls to be ``restored'' symmetries of the model, and we will remove the square generators.
This leads to the following gauge Hamiltonian that sums over only hexagonal faces, balls, and cubes,
\begin{align}
    H = -\sum_{S \in \varhexagon, \text{\mancube}, \myfancysymbol} \left(J_X \bigotimes_{j \in S} X_j + J_Z \bigotimes_{j \in S} Z_j \right).
    \label{eq:gccrestored}
\end{align}
The cube and ball terms commute with all of the hexagon terms, and so constitute symmetries of the model. 
Once we fix a sector for these terms, we can solve the remaining model by mapping to free-fermions as follows~\cite{roberts2019symmetry}. 

In Figure~\ref{fig:gcc}, we represent a subsection of the qubit lattice of this code, where qubits live at the tetravalent vertices. 
Because the cube and ball terms commute with everything, the frustration graph depends only on the hexagonal faces, several of which are colored in Figure~\ref{fig:gcc}. 
We see that some of these faces intersect at exactly one vertex, and so the $X$- and $Z$-type gauge generators will anticommute on the associated qubit. 
These intersection patterns only occur in 1D chains along the cardinal axes of the lattice. 
In particular, every hexagonal face only overlaps with two other hexagonal faces along these chains and otherwise intersects the other faces at an even number of qubits. 
The frustration graph of this model thus decouples into a set of disconnected paths, which are line graphs, and in fact they are the frustration graph of the XY-model~\cite{lieb1961two} and the 1-d Kitaev wire \cite{kitaev2001unpaired}. 
A free-fermion mapping therefore exists for this model, and this demonstrates an example of how one might construct subsystem codes with a free-fermion solution to obtain desired spectral properties.
In particular, when $|J_x| \not= |J_z|$ the model in Eq.~(\ref{eq:gccrestored}) is gapped. 
We note that this observation was made previously in Ref.~\cite{roberts2019symmetry} in the context of quantum error correcting codes. 

\begin{figure}[t!]
    \centering
    \includegraphics[width=0.5\textwidth]{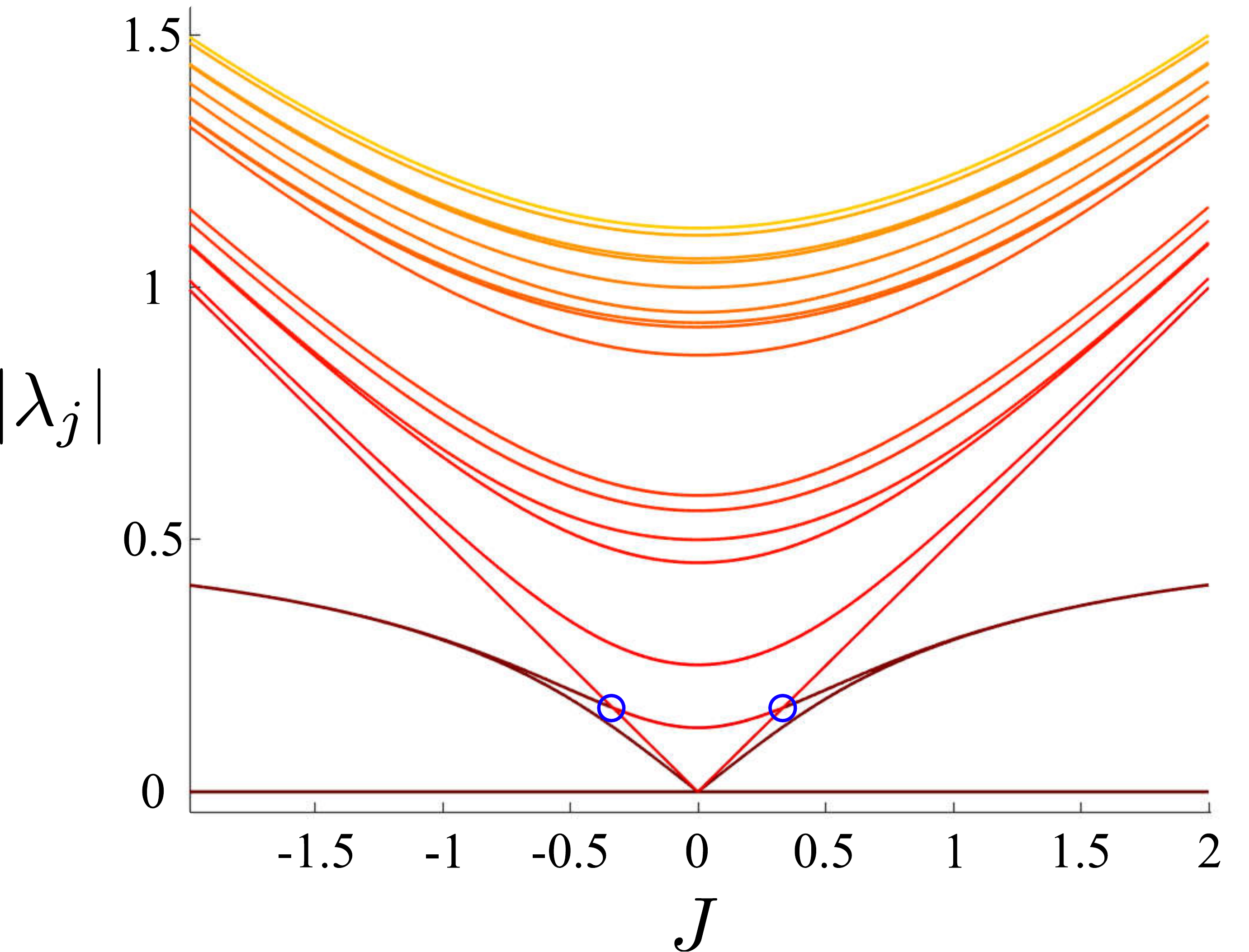}
    \caption{Single-Particle spectrum of the Sierpinski-Hanoi model for $k = 5$ with an additional local field term present in the symmetry sector for which all cycles are $+1$. 
    Circled are two critical points where excited bands become degenerate.}
    \label{fig:shmodelspectrum}
\end{figure}

\subsection{Sierpinski-Hanoi model}

Finally, we introduce our own example of a solvable spin model, which was previously unknown to the best of our knowledge. 
This model consists of 3-body $XYZ$-interaction terms on the shaded cells of the Sierpinski triangle, all with the same orientation, as depicted in Fig.~\ref{fig:shmodel}.  Explicitly,  the Hamiltonian for this model is given by

\begin{align}
    H = \sum_{(i, j, k) \in \color{babyblueeyes}\mbox{\normalsize$\blacktriangle$}} X_i Y_j Z_k + J H_{\text{local}}
\end{align}
where $(i, j, k)$ is an ordered triple of qubits belonging to a particular shaded cell on the lattice and we will define additional on-site terms $H_{\text{local}}$ in Eq.~(\ref{eq:shlocaldef}). 
An instance of the model is parameterized by $k$, the fractal recursion depth of the underlying Sierpinski lattice, where $k = 1$ is taken to be a single 3-qubit interaction. 

Let us first consider the simplified model where $J=0$.
Then the frustration graph of this model is the so-called \emph{Hanoi graph} $H_{3}^{k - 1}$. 
The vertices of this graph are labeled by states of the towers of Hanoi problem with $k - 1$ discs, and two vertices are neighboring if transitioning between the corresponding states is an allowed move in the problem. 
Perhaps surprisingly, this graph is a line graph, with root and highlighted spanning tree shown in Fig.~\ref{fig:shmodel}. 
Furthermore, the root graph contains $H_{3}^{k - 2}$ as a topological minor, obtained by removing the vertices of degree one and contracting the vertices of degree two each along one of their two edges.

This model contains 
\begin{align}
    n = \frac{3}{2} (3^{k - 1} + 1)
\end{align}
physical qubits, and its root graph contains
\begin{align}
    |\widetilde{V}| = 
    \begin{cases} 
        2 & k = 1 \\
        \frac{1}{2}\left[5 \times 3^{k - 2} + 3\right] & k > 1
    \end{cases}
\end{align}
vertices. 
A T-join for this graph therefore exists only for even $k$ and $k = 1$, and the parity operator is never trivial when a T-join exists. 
The root graph also contains
\begin{align}
    |Z_H| = \sum_{j = 0}^{k - 3} 3^j = 
    \begin{cases}
        0 & k \leq 2 \\
        \frac{1}{2} \left(3^{k - 2} - 1\right) & k > 2
    \end{cases}
\end{align}
fundamental cycles. 
None of the generators of the cycle subgroup are trivial since every qubit is acted upon by at most two anti-commuting operators. 
The number of logical qubits in this model is therefore
\begin{align}
    n_{L} = 
    \begin{cases}
        2 & k = 1 \\
        \frac{1}{4} \left[11\times 3^{k - 2} + 8+(-1)^k\right] & k > 1,
    \end{cases}
\end{align}
and so this Hamiltonian encodes logical qubits at a constant rate of $\frac{11}{18}$ in the infinite $k$ limit. 
Perhaps unsurprisingly, these logical qubits live on the boundaries of the fractal, and we can obtain a set of generators for the logical Pauli group of this model as shown in Fig.~\ref{fig:shmodel}. 
We can encode logical quantum information in this model by picking symplectic pairs of generators from this group, which anticommute with one another yet commute with the remaining generators in the group. 
The remaining such generators can be used as gauge qubits for the logical qubit we wish to protect, and the free-fermion Hamiltonian of the model can be used for error suppression. 

We are also free to add an anisotropic local-field term to a subset of the qubits without breaking solvability
\begin{align}
    H_{\mathrm{local}} = \sum_{i \in \begingroup\color{babyblueeyes}\mbox{\normalsize$\blacktriangle$}\endgroup \mathbf{-} \begingroup\color{babyblueeyes}\mbox{\normalsize$\blacktriangle$}\endgroup} \sigma_i^{j_i}
    \label{eq:shlocaldef}
\end{align}
The sum is taken over all qubits corresponding to \emph{black} edges connecting two shaded cells in Fig.~\ref{fig:shmodel}. 
$j_i$ is the third Pauli type from the two Paulis acting on qubit $i$ by the interaction terms. 
The effect of these local-field terms is to couple every black vertex in the root graph shown in Fig.~\ref{fig:shmodel}, except for those at the three corners, to a dedicated fermion mode. 
We do not depict these additional modes to avoid cluttering the figure. 
These terms also do not affect the symmetries of the model, except possibly to add the parity operator to $\mathcal{Z}(\mathcal{P}_H)$ when the number of vertices in the original graph was odd, as the number of vertices in the graph with local field terms present will always be even. 
The parity operator can then be constructed as the product of all of the Hamiltonian terms, since the root graph only has vertices of degrees 1 and 3.

In Fig.~\ref{fig:shmodelspectrum}, we display the single-particle spectrum of the Sierpinski-Hanoi model as a function of the local field $J$ for $k = 5$ in the sector for which all of the cycle symmetries are in their mutual $+1$ eigenspace. We highlight two critical points where excited energy levels become degenerate to within our numerical precision. We observe that the locations of these points are not system-size-independent, but rather asymptotically approach $J = 0$ as the system size is increased. We conjecture that this is connected to the emergence of scale symmetry, which the model possesses in the thermodynamic limit, yet not for any finite size. It would be intriguing if certain physical features of this symmetry could be realized at the critical points at finite size, potentially opening the door to simulating scale-invariant systems on a finite-sized quantum computer.

\section{Proofs of Main Theorems}

\subsection{Proof of Theorem 1}
\label{sec:thm1proof}
We restate Theorem \ref{thm:ffsolution} for convenience. 

\begin{customthm}{1, restated}[Existence of free-fermion solution] 
An injective map $\phi$ as defined in Eq.~(\ref{eq:paulitopair}) and Eq.~(\ref{eq:scommeq}) exists for the Hamiltonian $H$ as defined in Eq.~(\ref{eq:hdef}) if and only if there exists a root graph $R$ such that
\begin{align}
    G(H) \simeq L(R),
\label{eq:ffsolutionrestatement}
\end{align} 
where R is the hopping graph of the free-fermion solution.
\label{thm:ffsolutionrestatement}
\end{customthm}

\begin{proof}
If $\phi$ exists, define $R = (V, E)$, where $E \equiv \{(\phi_1(\boldsymbol{j}), \phi_2(\boldsymbol{j})) | \phi_1(\boldsymbol{j}), \phi_2(\boldsymbol{j}) \in V, \boldsymbol{j} \in E\}$. 
If and only if $|(\phi_1(\boldsymbol{j}), \phi_2(\boldsymbol{j})) \cap (\phi_1(\boldsymbol{k}), \phi_2(\boldsymbol{k}))| = 1$, then the vertices corresponding to $\boldsymbol{j}$ and $\boldsymbol{k}$ are neighboring in $G$ by Eq.~(\ref{eq:scommeq}). 
Thus, $G(H) \simeq L(R)$ and a mapping $\phi$ exists only if $R$ does.

If there exists a graph $R \equiv (V, E)$ such that $G \simeq L(R)$, take the Krausz decomposition of $G(H)$. 
Namely, partition the edges of $G(H)$ as $F = \{C_1, \dots, C_{|V|}\}$, where each $C_i$ constitutes a clique in $G$ and such that every vertex in $G$ appears in at most two $C_i$. 
The cliques in this partitioning correspond to the vertices $V$ of $R$. 
For each vertex $\boldsymbol{j}$, define $\phi(\boldsymbol{j})$ to be the pair of cliques in which $\boldsymbol{j}$ appears. 
Since the cliques partition the edges of $G$, then if vertices $\boldsymbol{j}$ and $\boldsymbol{k}$ are neighboring in $G$, they must appear in exactly one clique together, and thus $|(\phi_1(\boldsymbol{j}), \phi_2(\boldsymbol{j})) \cap (\phi_1(\boldsymbol{k}), \phi_2(\boldsymbol{k}))| = 1$. 
Thus, $\phi$ satisfies Eq.~(\ref{eq:scommeq}). 
Furthermore, $\phi$ is injective, since if there are two vertices $\boldsymbol{j}$, $\boldsymbol{k} \in G$ such that $\phi(\boldsymbol{j}) = \phi(\boldsymbol{k})$, then $\boldsymbol{j}$ and $\boldsymbol{k}$ appear in the same two cliques, but since the Krausz decomposition is a partition of the edges, this would require that $\boldsymbol{j}$ and $\boldsymbol{k}$ neighbor by two edges. 
However, the definition of $G$ guarantees that pairs of vertices can only neighbor by at most one edge, and so this is impossible. 
Therefore $\phi$ is injective. 
\end{proof}

\subsection{Proof of Theorem 2}
\label{sec:thm2proof}

Once again, we restate our theorem for convenience

\begin{customthm}{2, restated}[Symmetries are cycles and parity]
Given a Hamiltonian satisfying Eq.~(\ref{eq:ffsolutionrestatement}) such that the number of vertices $|\widetilde{V}|$ in the root graph is odd, then we have
\begin{align}
    \mathcal{Z}(\mathcal{P}_H) = Z_{H}.
\end{align}
\noindent If the number of vertices in the root graph is even, then we have
\begin{align}
\mathcal{Z}(\mathcal{P}_H) = \left\langle Z_{H}, P \right\rangle.
\end{align}
\label{thm:symmetriesrestated}
\end{customthm}

\emph{Proof.} Let $G \equiv (E, F) \simeq L(R)$ be the connected line graph of a connected root graph $R = (V, E)$, and let $G$ have adjacency matrix $\mathbf{A}$. 
We will need the following well-known factorization of a line graph adjacency matrix $\mathbf{A}$
\begin{align}
    \mathbf{A} = \mathbf{B} \mathbf{B}^{\mathrm{T}} \ \mathrm{(mod \ 2)}
\end{align}
where $\mathbf{B}$ is the edge-vertex incidence matrix of $R$. 
That is, $\mathbf{B}$ is a $|E| \times |V|$ matrix such that
\begin{align}
    B_{\boldsymbol{j} l} = 
    \begin{cases}
        1 & l \in \boldsymbol{j} \\
        0 & \mathrm{otherwise}
    \end{cases}
\end{align}
for all $\boldsymbol{j} \in E$ and $l \in V$. 
We can interpret $\mathbf{B}$ as defining the map $\phi$ via
\begin{align}
    \phi : \sigma^{\boldsymbol{j}} \mapsto \prod_{l \in V} \gamma_l^{B_{\boldsymbol{j} l}} 
\end{align}
That is, $\phi_1(\boldsymbol{j})$ and $\phi_2(\boldsymbol{j})$ are the indices of the nonzero elements in the row labeled by $\boldsymbol{j}$ in $\mathbf{B}$. 
This then defines the adjacency matrix $\mathbf{A}$ through the scalar commutator as
\begin{align}
    \scomm{\prod_{l \in V} \gamma_l^{B_{\boldsymbol{j} l}}}{\prod_{m \in V} \gamma_m^{B_{\boldsymbol{k} l}}} &= \prod_{l, m \in V} \scomm{\gamma_l^{B_{\boldsymbol{j} l}}} {\gamma_m^{B_{\boldsymbol{k} m}}} \\
    &= \prod_{l, m \in V} (-1)^{(1 - \delta_{lm}) B_{\boldsymbol{j} l} B_{\boldsymbol{k} m}} \\
    \scomm{\prod_{l \in V} \gamma_l^{B_{\boldsymbol{j} l}}}{\prod_{m \in V} \gamma_m^{B_{\boldsymbol{k} l}}} &= (-1)^{(\mathbf{B} \mathbf{B}^{\mathrm{T}})_{\boldsymbol{j} \boldsymbol{k}} + \left(\sum_{l} B_{\boldsymbol{j} l}\right) \left(\sum_{l} B_{\boldsymbol{k} l}\right)} \\
    (-1)^{A_{\boldsymbol{j} \boldsymbol{k}}} &= (-1)^{(\mathbf{B} \mathbf{B}^{\mathrm{T}})_{\boldsymbol{j} \boldsymbol{k}}}
\end{align}
From the third to the fourth line, we replaced the left-hand side with the definition of $\mathbf{A}$ and used the fact that the rows of $\mathbf{B}$ have exactly two nonzero elements. 
By the distributive property of the scalar commutator Eq.~(\ref{eq:distribution}), we can extend the above equation to products of Hamiltonian terms
\begin{align}
    \prod_{\boldsymbol{j} \in E} \left(\prod_{l \in V} \gamma_l^{B_{\boldsymbol{j} l}}\right)^{v_{\boldsymbol{j}}} = \pm \prod_{l \in V} \gamma_l^{\left(\mathbf{B}^{\mathrm{T}} \cdot \mathbf{v} \right)_{l}} \mathrm{,}
\end{align}
where $\mathbf{v} \in \{0, 1\}^{\times |E|}$, as
\begin{align}
    \scomm{\prod_{l \in V} \gamma_l^{B_{\boldsymbol{j} l}}}{\prod_{\boldsymbol{k} \in E} \left(\prod_{m \in V} \gamma_m^{B_{\boldsymbol{k} m}}\right)^{v_{\boldsymbol{k}}}} = (-1)^{\left(\mathbf{B}\mathbf{B}^{\mathrm{T}} \cdot \mathbf{v} \right)_{\boldsymbol{j}}} 
\end{align}
since linear combinations of rows of $\mathbf{B}$ over $\mathds{F}_2$ will have even-many ones. 
Every element of $\mathcal{P}_H$ is a (non-unique) linear combination of rows of $\mathbf{B}$ over $\mathds{F}_2$, and so to characterize the elements of $\mathcal{Z}(\mathcal{P}_H)$, it is sufficient to find a spanning set of the kernel of $\mathbf{A}$,
\begin{align}
    \mathbf{A} \cdot \mathbf{v} = \mathbf{B} \mathbf{B}^{\mathrm{T}} \cdot \mathbf{v} = \mathbf{0} \ \mathrm{(mod \ 2)}
\end{align}
It is again well-known that the $\mathds{F}_2$-kernel of $\mathbf{B}^{\mathrm{T}}$ is the cycle space of $R$, and this specifies the cycle subgroup $Z_H$ as being contained in $\mathcal{Z}(\mathcal{P}_H)$. 
All that is left is therefore to find all $\mathbf{v}$ such that $\mathbf{B}^{\mathrm{T}} \cdot \mathbf{v}$ is in the kernel of $\mathbf{B}$. 
Since we have assumed $G$ is connected, it is easy to see that the only element in this kernel is $\mathbf{1}$, the all-ones vector. 
Thus, $\mathbf{v}$ will also be in the kernel of $\mathbf{A}$ if it defines a T-join of $G$. 
If $|V|$ is even, then we can construct a T-join by first pairing the vertices along paths of $G$. 
We can then ensure that each edge appears at most once in the T-join by taking the symmetric difference of all paths. 
If $|V|$ is odd, then no T-join exists. 
Indeed, assume that a T-join $T$ does exist for $|V|$ odd, and let $\widetilde{G} = (V, T) \subseteq G$ be the subgraph of $G$ containing exactly the edges from the T-join. 
By construction $\widetilde{G}$ contains all the vertices of $G$ and has odd degree for every vertex, though it may no longer be connected. 
Let these degrees be $\{d_j\}_{j \in V}$, then by the handshaking lemma
\begin{align}
    \sum_{j = 1}^{|V|} d_j = 2|T| \mathrm{.}
\end{align}
However, the left side must be odd since we have assumed the degree of every vertex in $\widetilde{G}$ is odd, and the number of vertices is also odd, and so we have a contradiction. \qed

\section{Discussion}

We have seen how the tools of graph theory can be leveraged to solve a wide class of spin models via mapping to free fermions, and given an explicit procedure for constructing the free-fermion solution when one exists. 
A major remaining open question, however, concerns the characterization of free-fermion solutions beyond the generator-to-generator mappings we consider here. 
That is, if $G(H)$ is not a line graph and no removal of twin vertices will make it so, then it may \emph{still} be possible for a free-fermion solution for $H$ to exist thanks to the continuum of locally equivalent Pauli-bases into which $H$ may be expanded. 
Our fundamental theorem does not rule out the possibility that special such bases may exist.
The problem of finding such bases is equivalent to finding specific unitary rotations of $H$ for which the $G(H)$ again becomes a line graph.
These rotations must be outside of the Clifford group, since the frustration graph is a Clifford invariant. 
Their existence may therefore depend on specific algebraic relationships between the Pauli coefficients $h_{\boldsymbol{j}}$ in the Hamiltonian, since the existence of a free-fermionization is a spectral invariant. 
We expect such transformations will be hard to find in general, though perhaps progress can be made for single-qubit rotations on 2-local Hamiltonians in a similar vein as in Ref.~\cite{klassen2019twolocalqubit} for stoquasticity. Recently, a local spin-$\sfrac{1}{2}$ model with a free-fermion solution -- despite no such generator-to-generator solution existing -- has been found in Ref.~\cite{fendley2019free}. An investigation of models which may be fermionized by these more general transformations is therefore an interesting subject of future work.

It is natural to ask whether our results could have implications for simulating quantum systems and quantum computation. 
We expect our characterization to shed some light on the inverse problem of finding fermion-to-qubit mappings, such as the Bravyi-Kitaev superfast encoding \cite{bravyi2002fermionic}, Bravyi-Kitaev transform \cite{seeley2012bravyikitaev}, and generalized superfast encoding \cite{setia2018superfast}. 
It is possible to achieve further encodings by introducing ancillary fermion modes, as seen in the Verstraete-Cirac mapping \cite{verstraete2005mapping} and the contemporaneous mapping introduced by Ball \cite{ball2005fermions}. 
Encodings can be further improved through tailoring to specific symmetries \cite{bravyi2017tapering}, connectivity structures \cite{steudtner2018fermion, jiang2018majorana}, and through the application of Fenwick trees \cite{havlicek2017operator}. 
Recently, a treelike mapping was shown to achieve optimal average-case Pauli-weight in Ref~\cite{jiang2019optimal}. 
In their ``Discussion'' section, the authors remark that an interesting future direction for their work would involve introducing ancillary qubits to their mapping. 
We expect our classification of the symmetries of free-fermion spin models to help guide this investigation, though further work is required to fully characterize the logical symmetry groups which can be realized by these models. 

Finally, our characterization highlights the possibility of a ``free-fermion rank" for Hamiltonians as an important measure of classical simulability. 
Namely, if there is no free-fermion solution for a given Hamiltonian, we can still group terms into collections such that each collection independently has such a solution. 
An interesting natural question for future work is whether the minimal number of such collections required can be interpreted as a quantum resource in an analogous way to the fermionic Gaussian rank \cite{bravyi2017complexity} or stabilizer rank \cite{bravyi2019simulation} for states.

\begin{acknowledgements}
We thank Samuel Elman, Ben Macintosh, Ryan Mann, Nick Menicucci, Andrew Doherty, Stephen Bartlett, Sam Roberts, Alicia Koll\'{a}r, Deniz Stiegemann, Sayonee Ray, Chris Jackson, Jonathan Gross, and Nicholas Rubin for valuable discussions throughout this project. 
This work was supported by the Australian Research Council via EQuS project number CE170100009.
\end{acknowledgements}

\end{document}